\newcommand{\eq}[1]{\hyperref[eq:#1]{(\ref*{eq:#1})}}
\renewcommand{\sec}[1]{\hyperref[sec:#1]{Section~\ref*{sec:#1}}}
\newcommand{\app}[1]{\hyperref[app:#1]{Appendix~\ref*{app:#1}}}
\newcommand{\thm}[1]{\hyperref[thm:#1]{Theorem~\ref*{thm:#1}}}
\newcommand{\lem}[1]{\hyperref[lem:#1]{Lemma~\ref*{lem:#1}}}
\newcommand{\cor}[1]{\hyperref[cor:#1]{Corollary~\ref*{cor:#1}}}
\newcommand{\defn}[1]{\hyperref[defn:#1]{Definition~\ref*{defn:#1}}}
\newcommand{\ex}[1]{\hyperref[ex:#1]{Example~\ref*{ex:#1}}}
\newcommand{\fct}[1]{\hyperref[fct:#1]{Fact~\ref*{fct:#1}}}
\newcommand{\fig}[1]{\hyperref[fig:#1]{Figure~\ref*{fig:#1}}}
\newcommand{\figs}[1]{\hyperref[fig:#1]{Figures~\ref*{fig:#1}}}
\theoremstyle{definition}
\newtheorem{definition}{Definition}
\newtheorem{problem}{Problem}
\theoremstyle{plain}
\newtheorem{theorem}{Theorem}
\newtheorem{lemma}{Lemma}
\newtheorem{corollary}{Corollary}
\newtheorem{fact}{Fact}
\newcommand{\microspace}{\mspace{0.5mu}}
\newcommand{\sket}[1]{|\microspace #1 \microspace \rangle}
\newcommand{\sbra}[1]{\langle\microspace #1 \microspace |}
\newcommand{\sden}[1]{\sket{#1}\mspace{-4mu}\sbra{#1}}
\newcommand{\ket}[1]{\left|\microspace #1 \microspace \right\rangle}
\newcommand{\bra}[1]{\left\langle\microspace #1 \microspace \right|}
\newcommand{\den}[1]{\ket{#1}\mspace{-4mu}\bra{#1}}
\newcommand{\II}{\mathbb{I}}
\newcommand{\Span}{\operatorname{span}}
\title{Complexity of the XY antiferromagnet at fixed magnetization}
\author{Andrew M. Childs$^{1,2,3}$} \email{amchilds@umd.edu}
\author{David Gosset$^{1,2,4}$} \email{dngosset@gmail.com}
\author{Zak Webb$^{2,5}$} \email{zakwwebb@gmail.com}
\address{$^1$ Department of Combinatorics \& Optimization, University of Waterloo}
\address{$^2$ Institute for Quantum Computing, University of Waterloo}
\address{$^3$ Department of Computer Science,
Institute for Advanced Computer Studies, and
Joint Center for Quantum Information and Computer Science,
University of Maryland}
\address{$^4$ Walter Burke Institute for Theoretical Physics and Institute for Quantum Information and Matter, California Institute of Technology}
\date{}
\address{$^5$ Department of Physics \& Astronomy, University of Waterloo}
\date{}
\begin{document}

\maketitle

\begin{abstract}
We prove that approximating the ground energy of the antiferromagnetic XY model on a simple graph at fixed magnetization (given as part of the instance specification) is QMA-complete. To show this, we strengthen a previous result by establishing QMA-completeness for approximating the ground energy of the Bose-Hubbard model on simple graphs. Using a connection between the XY and Bose-Hubbard models that we exploited in previous work, this establishes QMA-completeness of the XY model.
\end{abstract}

\section{Introduction}

Kitaev pioneered the study of quantum constraint satisfaction problems, where the goal is to approximate the ground energy of a Hamiltonian \cite{KSV02}. Many examples of such ground energy problems are known to be complete for the complexity class QMA, a quantum analogue of NP (see, e.g., \cite{Boo12}).

In this paper, we focus on computational problems defined by graphs.  For example, MAX-CUT is a classical constraint satisfaction problem defined by a graph. The goal is to find a subset of vertices that maximizes the number of edges between that subset and its complement. This can be rephrased as a ground energy problem: it is equivalent to minimizing the energy of the Ising antiferromagnet on the graph, where there is a bit for every vertex and a constraint penalizing adjacent bits that agree. Equivalently, we may consider a qubit at every vertex and a $ZZ$ interaction for every edge.

To obtain a genuinely quantum constraint satisfaction problem defined by a graph, we consider interaction terms with nonzero off-diagonal matrix elements.  Natural choices include the antiferromagnetic Heisenberg model, with an $XX+YY+ZZ$ interaction for each edge, and the antiferromagnetic XY model, with an $XX+YY$ interaction for each edge. The complexities of the ground energy problems for these models are unknown, although some variants have been studied \cite{SV09, CM13, BHQMA}. Recent work has established QMA-completeness for the antiferromagnetic XY model with coefficients that vary throughout the graph and depend on system size \cite{PM15}. This result holds even when the graph is a triangular lattice in two dimensions. Our result is incomparable because while we consider general (simple) graphs, we restrict the coefficient for every edge to be $1$. In addition, the restriction to fixed magnetization is not present in reference \cite{PM15}.

Our main result concerns the antiferromagnetic XY model on a graph.  For a given simple graph $G$ with vertex set $V(G)$ and edge set $E(G)$, the Hamiltonian has the form
\begin{equation}
  \frac{1}{2} \sum_{\{i,j\}\in E(G)} (X_i X_j + Y_i Y_j)
\label{eq:ham_intro}
\end{equation}
where $X,Y,Z$ denote Pauli matrices and a subscript indicates which qubit is acted on.
Note that this Hamiltonian commutes with the total magnetization operator $M_z = \sum_{i \in V(G)} Z_i$, so it decomposes into sectors for each eigenvalue of $M_z$. We prove that approximating the ground energy of this Hamiltonian in a sector with fixed magnetization is a QMA-complete problem.

Our result is a natural extension of our previous work \cite{BHQMA}. The difference is that we previously considered Hamiltonians defined by graphs that may have self-loops. In that context we established QMA-completeness of the ground energy problem for the Bose-Hubbard model (at fixed particle number), a system of bosons hopping on a graph with a repulsive on-site interaction. Then, using the relationship between hard-core bosons and spins, we also established QMA-completeness of a ground energy problem related to the XY model, where the Hamiltonian has an $XX+YY$ term associated with each edge (as in \eq{ham_intro}) as well as a local magnetic field  associated with each self-loop. In this paper we present a stronger result since we consider only simple graphs and Hamiltonians of the form  \eq{ham_intro}.

Our proof relies heavily on machinery and results from reference \cite{BHQMA}. We prove QMA-hardness of the ground energy problem for the Bose-Hubbard model on a \emph{simple} graph at fixed particle number. Our starting point is the previous QMA-completeness result \cite{BHQMA} which pertains to graphs $G$ with self-loops. Counterintuitively, we begin by increasing the number of self-loops: we modify a graph $G$ from the previous construction to obtain a new graph $G^{\rm SL}$ in which every vertex has a self-loop. The new graph $G^{\rm SL}$ is formed by taking two copies of $G$, adding self-loops to each of them, and then adding some edges between the two copies. The modification is performed in such a way that the ground spaces of the Bose-Hubbard model on $G$ and $G^{\rm SL}$ (in the sector with a given number of particles) are simply related.  Once we have a self-loop at every vertex, we then remove all self-loops to obtain another graph $G^{\rm NSL}$ with no self-loops, which was our goal. This removal is equivalent to subtracting a term in the Hamiltonian that is proportional to the identity.  We thus obtain a graph $G^{\rm NSL}$ with no self-loops such that approximating the ground energy of the Bose-Hubbard model on $G^{\rm NSL}$ is as hard as approximating the ground energy of $G$, which is QMA-hard by our previous result.

Finally, we use a reduction presented in \cite{BHQMA} showing that an instance of this ground energy problem {for the Bose-Hubbard model} on a graph is equivalent to an instance of the ground energy problem for the XY model at fixed magnetization on the same graph. Since this reduction preserves the graph, we establish QMA-completeness for the XY model as discussed above.

The remainder of this paper is organized as follows.  In \sec{prelim}, we provide basic definitions and tools used in this paper. In \sec{previous}, we review results from reference \cite{BHQMA}. In \sec{adding_self_loops}, we describe a procedure that modifies graphs from the previous QMA-completeness result so that every vertex has a self-loop. We show a relationship between the ground energies before and after the modification. Finally, we remove all the self-loops from the modified graph, giving only a constant overall energy shift in the associated Hamiltonian. In \sec{removing_self_loops}, we use this strategy to establish that the ground energy problem for the antiferromagnetic XY model (on simple graphs, at fixed magnetization) is QMA-complete.

\section{Preliminaries}
\label{sec:prelim}

In this paper, $G$ denotes a graph with vertex set $V(G)$, edge set $E(G)$, and adjacency matrix $A(G)$.  Later we will be interested in the case where $G$ is a simple graph, but for now we allow the possibility that it has at most one self-loop per vertex. In other words, $A(G)$ can be any symmetric 0-1 matrix. 

\subsection{The antiferromagnetic XY model on a graph}

We define the antiferromagnetic XY model with local magnetic fields on $G$ to be the $|V(G)|$-qubit, two-local Hamiltonian
\begin{align}
  O_G &= \sum_{A(G)_{ij}=1, i\neq j} (\sket{01}\sbra{10} + \sket{10}\sbra{01})_{ij} + \sum_{A(G)_{ii}=1} \sket{1}\sbra{1}_i\\
    &=  \sum_{A(G)_{ij}=1,i\neq j} \frac{X_i X_j + Y_i Y_j}{2} + \sum_{A(G)_{ii} = 1} \frac{1 - Z_i}{2}.
\label{eq:OG}
\end{align}
Note that the second term is only present if the graph $G$ has self loops; this term vanishes for simple graphs, giving the usual antiferromagnetic XY model.  It is easy to see that this Hamiltonian (either with or without the second, local magnetic field term) conserves Hamming weight.  Let
\[
\mathrm{Wt}_N=\Span\{|z\rangle: z\in \{0,1\}^{|V(G)|}, \, \mathrm{wt}(z)=N\}
\]
be the subspace with Hamming weight $N$. We write $\theta_N(G)$ for the smallest eigenvalue of $O_G$ within the sector with Hamming weight $N$ (i.e., the smallest eigenvalue of the restriction $O_G|_{\mathrm{Wt}_N}$).
  
\subsection{The Bose-Hubbard model on a graph}

We now review the Hamiltonian of the Bose-Hubbard model on a graph, as defined in \cite{BHQMA}. We present only the ``first quantized'' formulation of this model that we use in this paper; see \cite{BHQMA} for a broader discussion.

The Hilbert space of $N$ distinguishable particles that live on the vertices of $G$ is 
\[
\bigl(\mathbb{C}^{|V(G)|}\bigr)^{\otimes N}=\Span\{ |i_{1}i_{2}\ldots i_{N}\rangle:\, i_{j}\in V(G)\}.
\]
Here each register represents the location of a particle. The Hilbert space of $N$ indistinguishable bosons on $G$
is the subspace of symmetric states
\[
\mathcal{Z}_{N}(G)=\Span\{ \text{Sym}(|i_{1}i_{2}\ldots i_{N}\rangle) : i_{j}\in V(G)\} 
\]
 where 
\[
\text{Sym}(|i_{1}i_{2}\ldots i_{N}\rangle)=\frac{1}{\sqrt{N!}}\sum_{\pi\in S_{N}}|i_{\pi(1)}i_{\pi(2)}\ldots i_{\pi(N)}\rangle
\]
and $S_{N}$ is the symmetric group on $N$ elements. 

For any operator $M\in \mathbb{C}^{|V(G)|}$, write
\[
M^{(i)}=\underbrace{1\otimes \cdots \otimes 1}_{i-1} \otimes \, M \otimes \underbrace{1 \otimes \cdots \otimes 1}_{N-i}.
\]
for the operator that acts on $(\mathbb{C}^{|V(G)|})^{\otimes N}$ as $M$ on the $i$th register and as the identity on all other registers. Define 
\begin{equation}
H_{G}^{N}=\sum_{i=1}^{N} A(G)^{(i)} +\sum_{k\in V(G)} \hat{n}_k(\hat{n}_k-1)
\label{eq:HGN}
\end{equation}
where
\[
\hat{n}_k=\sum_{i=1}^N |k\rangle\langle k|^{(i)}
\]
is an operator that counts the number of particles at vertex $k$. The Hamiltonian \eq{HGN} acts on the distinguishable-particle Hilbert space. Since it is symmetric under permutation of the $N$ registers, the bosonic space $\mathcal{Z}_N(G)$ is an invariant subspace for $H_G^{N}$. The $N$-particle Bose-Hubbard model is the restriction of this Hamiltonian to the bosonic subspace
\[
\bar{H}_{G}^{N}=H_G^{N}\big|_{\mathcal{Z}_N(G)}.
\]
Looking at equation \eq{HGN}, we see that the first term (the movement, or hopping, term) has smallest eigenvalue equal to $N$ times the smallest eigenvalue of $A(G)$, which we denote $\mu(G)$, while the second term (the interaction term) is positive semidefinite. Hence the smallest eigenvalue of $\bar{H}_{G}^{N}$ is at least that of $H_G^{N}$, which is at least $N\mu(G)$.

It is convenient to subtract this constant to make the Hamiltonian positive semidefinite. Define
\[
H(G,N)=\bar{H}_{G}^{N}-N\mu(G),
\]
and write $\lambda_N^1(G)\geq 0$ for its smallest eigenvalue. When $\lambda_N^1(G)=0$ the $N$-particle ground space minimizes the energy of both terms (movement and interaction) separately, and we say the Hamiltonian is \emph{frustration free}.

There is a connection between the Bose-Hubbard model on a graph and the XY model on the same graph.  Consider the subspace of bosonic $N$-particle states that have zero energy for the second (interaction) term in \eq{HGN}. (For example, any frustration-free state lives in this subspace.) States in this subspace have no support on basis states where more than one particle occupies any vertex of the graph. This is the subspace of \emph{hard-core bosons}.  Since every vertex can be occupied by at most one particle, this subspace can be identified with the Hamming weight $N$ subspace of $|V(G)|$ qubits (each qubit represents a vertex, with basis states $|0\rangle, |1\rangle$ representing unoccupied and occupied states, respectively). Thus the action of the Bose-Hubbard model in the subspace of hard-core bosons is equivalent to a spin model. In fact, as discussed in \cite{BHQMA}, the restriction of \eq{HGN} to the subspace of hard-core bosons is exactly equal to the restriction of $O_G$ (given by \eq{OG}) to the Hamming weight $N$ space $\mathrm{Wt}_N$.

\subsection{Gate graphs}

The QMA-completeness construction from reference \cite{BHQMA} uses a class
of graphs called gate graphs that we now review. 

\subsubsection{The graph $g_{0}$}

A $128$-vertex simple graph denoted $g_{0}$ plays a central role in the construction. The graph $g_{0}$ is defined
explicitly in reference \cite{BHQMA} by specifying its adjacency matrix $A(g_{0})$; here we only describe the properties that we use in this paper. The vertices of $g_{0}$
are labeled by tuples
\[
(z,t,j) : z\in\{0,1\},\, t\in[8],\, j\in\{0,\ldots,7\}
\]
where $[n] = \{1,\ldots,n\}$.
The adjacency matrix $A(g_{0})$ acts on the Hilbert space 
\[
\Span\{ |z\rangle|t\rangle|j\rangle: z\in\{0,1\},\, t\in[8],\, j\in\{0,\ldots,7\}\} .
\]
The smallest eigenvalue of $A(g_{0})$ is $e_{1}=-1-3\sqrt{2}$.  The corresponding
eigenspace has an orthonormal basis given by the four states
\begin{align}
|\psi_{z,0}\rangle & =\frac{1}{\sqrt{8}}\bigl(|z\rangle(|1\rangle+|3\rangle+|5\rangle+|7\rangle)+H|z\rangle(|2\rangle+|8\rangle)+HT|z\rangle(|4\rangle+|6\rangle)\bigr)|\omega\rangle\label{eq:psi1}\\
|\psi_{z,1}\rangle & =|\psi_{z,0}\rangle^{*}
\label{eq:psi2}
\end{align}
where $z\in \{0,1\}$, 
\[
H=\frac{1}{\sqrt{2}}\begin{pmatrix}
1 & 1\\
1 & -1
\end{pmatrix} \qquad T=\begin{pmatrix}
1 & 0\\
0 & e^{i\frac{\pi}{4}}
\end{pmatrix}
\]
 and 
\[
|\omega\rangle=\frac{1}{\sqrt{8}}\sum_{j=0}^{7}e^{-i\frac{\pi}{4}j}|j\rangle.
\]

Note that, depending on the value of $t$ in the second register, the first register of the state $|\psi_{z,0}\rangle$ contains the output of a single-qubit computation where either the identity, Hadamard, or $HT$ gate is applied to the state $|z\rangle$. (The state of the third register is in a product state with the first two, and is somewhat uninteresting; this register exists for technical reasons.)

\subsubsection{Gate graphs and gate diagrams}

A gate diagram is a schematic representation of a gate graph.
To define gate graphs, we first define gate diagrams
and then describe how a graph is associated with each of them. 

The simplest gate diagrams are shown in \fig{diagram_elements}. These three
basic gate diagrams are also called diagram elements. The diagram
elements each represent the same gate graph which is just
the $128$-vertex graph $g_{0}$ described above. Each diagram element
has a unitary label which is either $1$, $H$, or $HT$, as well
as a set of eight circles that we call nodes. A node labeled
$(z,t)$ is associated with the eight vertices of $g_{0}$ labeled $(z,t,j)$
with $j\in\{0,\ldots,7\}$. Note that only half of the $16$ possible nodes $(z,t)$ appear in a given diagram element. Moreover, the half that does appear depends on the unitary label $1$, $H$, or $HT$.

A gate diagram is constructed by taking a set of diagram elements
and adding edges between some pairs of nodes and self-loops to other
nodes. Each node may have a self loop or an incident edge but never
both (and never more than one edge or self-loop). If the gate diagram
has $R$ diagram elements, then each node can be labeled $(q,z,t)$
with $q\in[R]$, $z\in\{0,1\}$ and $t\in[8]$. We write $\mathcal{S}$
for the set of nodes $(q,z,t)$ that have self-loops attached and
we write $\mathcal{E}$ for the set of pairs of nodes $\{(q,z,t),(q^{\prime},z^{\prime},t^{\prime})\}$
that are connected by an edge.

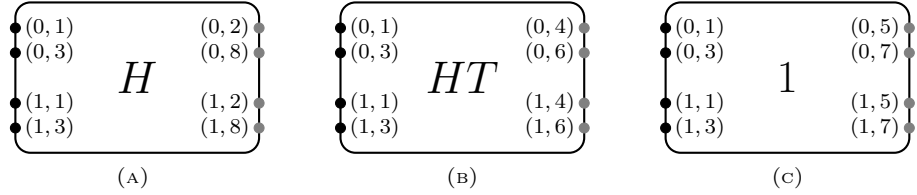
\begin{figure}
\centering
\subfloat[][]{ 
\label{fig:diagram_elementH}
\begin{tikzpicture}
  \draw[rounded corners=2mm,thick] (0,0) rectangle (3.24 cm, 2cm);

 \foreach \y in {.33,.66,1.33,1.66}
{ 
 \foreach \x/\color in {0/black,3.24/gray}
{      
\draw[fill=\color,draw=\color] (\x cm, \y cm) circle (.66mm);   
}
}   

\foreach \z in {0,1}
{    
\node[right] at (0,1.66-\z) {\footnotesize $(\z,1)$}; 
\node[right] at (0,1.33-\z) {\footnotesize $(\z,3)$};    

   \node[left] at (3.24,1.66-\z) {\footnotesize $(\z,2)$}; 
   \node[left] at (3.24,1.33-\z) {\footnotesize $(\z,8)$};  
}   

  \node at (1.62,1) {\huge $H$};   
\end{tikzpicture} 
} 
\qquad
\subfloat[][]{
\label{fig:diagram_elementHT}
\begin{tikzpicture}
  \draw[rounded corners=2mm,thick] (0,0) rectangle (3.24 cm, 2cm);
   
\foreach \y in {.33,.66,1.33,1.66}
{  
\foreach \x/\color in {0/black,3.24/gray}
{     
\draw[fill=\color,draw=\color] (\x cm, \y cm) circle (.66mm);   
}
}     

\foreach \z in {0,1}
{    
\node[right] at (0,1.66-\z) {\footnotesize $(\z,1)$}; 
   \node[right] at (0,1.33-\z) {\footnotesize $(\z,3)$};     
  
  \node[left] at (3.24,1.66-\z) {\footnotesize $(\z,4)$}; 
   \node[left] at (3.24,1.33-\z) {\footnotesize $(\z,6)$}; 
 }    
  
\node at (1.62,1) {\huge $HT$}; 
 \end{tikzpicture}
} 
\qquad 
\subfloat[][]
{
\label{fig:diagram_element1}
\begin{tikzpicture}
  \draw[rounded corners=2mm,thick] (0,0) rectangle (3.24 cm, 2cm);

\foreach \y in {.33,.66,1.33,1.66}
{   
\foreach \x/\color in {0/black,3.24/gray}
{     
\draw[fill=\color,draw=\color] (\x cm, \y cm) circle (.66mm);   
}
}    

\foreach \z in {0,1}
{     
\node[right] at (0,1.66-\z) {\footnotesize $(\z,1)$};   
 \node[right] at (0,1.33-\z) {\footnotesize $(\z,3)$};    

 \node[left] at (3.24,1.66-\z) {\footnotesize $(\z,5)$};  
  \node[left] at (3.24,1.33-\z) {\footnotesize $(\z,7)$};   }    
  
\node at (1.62,1) {\huge $1$};  
\end{tikzpicture}  
}

\caption{Diagram elements. \label{fig:diagram_elements}}
\end{figure}

A gate diagram with $R$ diagram elements, self-loop set $\mathcal{S}$,
and edge set $\mathcal{E}$ is associated with a gate graph $G$
as follows. The vertex set of $G$ is just $R$ copies of the vertex
set of $g_{0}$, with vertices labeled $(q,z,t,j)$ with $q\in[R]$, $z\in\{0,1\}$, $t\in[8]$, and $j\in\{0,\ldots,7\}$.
The adjacency matrix $A(G)$ acts on the Hilbert space 
\[
\Span\{ |q\rangle|z\rangle|t\rangle|j\rangle: q\in[R],\: z\in\{0,1\},\, t\in[8],\, j\in\{0,\ldots,7\}\} 
\]
and is defined by 
\begin{align}
A(G) & =1_{q}\otimes A(g_{0})+h_{\mathcal{S}}+h_{\mathcal{E}}\label{eq:AG}\\
h_{\mathcal{S}} & =\sum_{(q,z,t)\in \mathcal{S}}|q,z,t\rangle\langle q,z,t|\otimes1_{j}\label{eq:HS}\\
h_{\mathcal{E}} & =\sum_{\{(q,z,t),(q',z',t')\}\in \mathcal{\mathcal{E}}}(|q,z,t\rangle+|q^{\prime},z^{\prime},t^{\prime}\rangle)(\langle q,z,t|+\langle q^{\prime},z^{\prime},t^{\prime}|)\otimes1_{j}\label{eq:HE}
\end{align}
where we write $1_{q}$ and $1_{j}$ for the identity operators on
the first and fourth registers, respectively. Note that $h_{\mathcal{S}}$
and $h_{\mathcal{E}}$ are always positive semidefinite, so the smallest
eigenvalue of $A(G)$ is lower bounded by $e_{1}$, the smallest eigenvalue
of $A(g_{0})$. When these quantities are equal, we say $G$ is an
$e_{1}$-gate graph.

\begin{definition}
A gate graph $G$ is an $e_{1}$-gate graph if its smallest
eigenvalue is $e_{1}=-1-3\sqrt{2}$.
\end{definition}

Suppose $\ket{\chi}$ is a ground state of the adjacency matrix of an $e_1$-gate graph $G$. For each $q\in [R]$, define
\[
|\psi^q_{z,a}\rangle=|q\rangle|\psi_{z,a}\rangle
\]
(where the states $|\psi_{z,a}\rangle$ are defined in \eq{psi1}--\eq{psi2}) and note (from equation \eq{AG}) that $\ket{\chi}\in \mathcal{Y}$, where
\begin{equation}
\mathcal{Y}=\Span\{|\psi^q_{z,a}\rangle: q\in [R],\, z,a\in \{0,1\}\}
\label{eq:Yspace}
\end{equation}
and $h_\mathcal{S}\ket{\chi}=h_\mathcal{E}\ket{\chi}=0$.

\subsection{Spectral bounds for positive semidefinite matrices}  

Throughout our proof, we bound eigenvalue gaps of positive semidefinite matrices.  
For a positive semidefinite matrix $A$, let $\gamma(A)$ denote the smallest nonzero eigenvalue.

\begin{fact}
Let $H_A$ and $H_B$ be positive semidefinite matrices.  Let $H_A$ have nonempty nullspace $S$.  Then $\gamma(H_A+H_B) \leq \gamma(H_B|_{S})$.
  \label{fct:variation_gap}
\end{fact}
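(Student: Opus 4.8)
The plan is to produce a single test vector for the smallest nonzero eigenvalue of $H_A+H_B$ whose Rayleigh quotient equals $\gamma(H_B|_{S})$ and which is orthogonal to the nullspace of $H_A+H_B$, so that the variational characterization of $\gamma(H_A+H_B)$ closes the bound.

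Concretely, I would first pick a normalized eigenvector $v\in S$ of the restriction $H_B|_{S}$ attaining its smallest nonzero eigenvalue $\gamma(H_B|_{S})$ (if $H_B|_{S}=0$ there is nothing to prove, reading $\gamma$ of the zero operator as $+\infty$). Since $v\in S=\ker H_A$ we have $H_A v=0$, and since $v\in S$ we have $\langle v|H_B|v\rangle = \langle v|(H_B|_{S})|v\rangle = \gamma(H_B|_{S})$; hence $\langle v|(H_A+H_B)|v\rangle = \gamma(H_B|_{S})$.

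Next I would verify that $v\perp\ker(H_A+H_B)$. Here positive semidefiniteness is essential: if $(H_A+H_B)w=0$ then $\langle w|H_A|w\rangle + \langle w|H_B|w\rangle = 0$ with both terms nonnegative, forcing $H_A w = H_B w = 0$. Thus $\ker(H_A+H_B)=\ker H_A\cap\ker H_B\subseteq S$, and each such $w$ lies in the zero-eigenspace of $H_B|_{S}$. Since $v$ lies in a strictly positive eigenspace of $H_B|_{S}$, it is orthogonal to that zero-eigenspace, i.e.\ to $\ker(H_A+H_B)$.

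Finally, $H_A+H_B$ is positive semidefinite, so $\gamma(H_A+H_B)=\min\{\langle u|(H_A+H_B)|u\rangle : \|u\|=1,\ u\perp\ker(H_A+H_B)\}$; evaluating at $u=v$ yields $\gamma(H_A+H_B)\le\langle v|(H_A+H_B)|v\rangle = \gamma(H_B|_{S})$, as claimed. The argument is short, and the only place that needs care is the orthogonality step, which is precisely where the hypothesis that \emph{both} $H_A$ and $H_B$ are positive semidefinite is used; without it the inclusion $\ker(H_A+H_B)\subseteq\ker H_A$ can fail and the chosen $v$ need not be a legitimate test vector.
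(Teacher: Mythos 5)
Your proof is correct and follows essentially the same route as the paper's: take an eigenvector of $H_B|_{S}$ attaining $\gamma(H_B|_{S})$ as a variational test vector, and use positive semidefiniteness of both operators to show the nullspace of $H_A+H_B$ sits inside the zero eigenspace of $H_B|_{S}$, so the test vector is orthogonal to it. The only cosmetic difference is that you treat the empty- and nonempty-nullspace cases uniformly (and note the trivial $H_B|_{S}=0$ case), whereas the paper splits them explicitly.
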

\begin{proof}
Let $\ket{\eta}$ be an eigenvector of $H_B|_{S}$ with eigenvalue $\gamma(H_B|_{S})$. 

Suppose first that the nullspace of $H_A+H_B$ is nonempty. In this case the nullspace of $H_A+H_B$ is equal to the nullspace of $H_B|_{S}$ (since both $H_A$ and $H_B$ are positive semidefinite), and hence $\ket{\eta}$ is orthogonal to this space. Thus
\[
\gamma(H_A+H_B)\leq \bra{\eta}H_A+H_B\ket{\eta}=\bra{\eta}H_B\ket{\eta}=\gamma(H_B|_{S}).
\]

If instead the nullspace of $H_A+H_B$ is empty, then $\gamma(H_A+H_B)$ is the smallest eigenvalue of $H_A+H_B$ and again is variationally upper bounded by $\bra{\eta}H_A+H_B\ket{\eta}=\bra{\eta}H_B\ket{\eta}=\gamma(H_B|_{S})$.
\end{proof}

A version of the following lemma was used but not explicitly stated in reference \cite{MLM99}. We gave a proof of this ``Nullspace Projection Lemma'' in \cite{BHQMA} and used it extensively in that work. We recently became aware of another work \cite{AFH09} that proves a slightly stronger bound than the one from \cite{BHQMA}. We quote the better bound here.

\begin{lemma}[\textbf{Nullspace Projection Lemma }\cite{AFH09}]
\label{lem:npl}Let $H_A, H_B\geq0$. Suppose the nullspace $S$ of $H_A$ is nonempty and 
\[
\gamma(H_{B}|_{S})\geq c > 0\qquad\text{and}\qquad\gamma(H_{A})\geq d >0.
\]
Then 
\begin{equation}
\gamma(H_{A}+H_{B})\geq\frac{cd}{d+\|{H_{B}}\|}.\label{eq:npl_lower_bnd}
\end{equation}
\end{lemma}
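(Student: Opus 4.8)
The plan is to establish the bound directly from the definition of $\gamma$: I would show that every unit vector $\ket{\psi}$ orthogonal to $N:=\ker(H_A+H_B)$ satisfies $\bra{\psi}(H_A+H_B)\ket{\psi}\geq \tfrac{cd}{d+\|H_B\|}$, which suffices since $\gamma(H_A+H_B)$ is the minimum of the Rayleigh quotient of $H_A+H_B$ over the nonzero vectors of $N^{\perp}$ (and $N^{\perp}\neq 0$ because $c>0$ forces $H_B\neq 0$). A preliminary observation to record is that, by positive semidefiniteness, $N=\ker H_A\cap\ker H_B\subseteq S$ and in fact $N$ is exactly the nullspace of $H_B|_{S}$; consequently any vector in $S$ orthogonal to $N$ has $H_B$-expectation at least $c$ times its squared norm, while (trivially) any vector in $S^{\perp}$ has $H_A$-expectation at least $d$ times its squared norm.

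Given such a $\ket{\psi}$, I would write $\ket{\psi}=\ket{\phi}+\ket{\phi^{\perp}}$ with $\ket{\phi}=P\ket{\psi}\in S$ and $\ket{\phi^{\perp}}=(\II-P)\ket{\psi}\in S^{\perp}$, where $P$ is the projector onto $S$; note that $\ket{\phi}$ is still orthogonal to $N$. The degenerate cases are immediate. If $\ket{\phi}=0$ then $\ket{\psi}\in S^{\perp}$, so $\bra{\psi}(H_A+H_B)\ket{\psi}\geq\bra{\psi}H_A\ket{\psi}\geq d\geq\tfrac{cd}{d+\|H_B\|}$, using $c\leq\|H_B\|$. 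If $\ket{\phi^{\perp}}=0$ then $\ket{\psi}\in S$ is orthogonal to $N$, so $\bra{\psi}(H_A+H_B)\ket{\psi}=\bra{\psi}H_B\ket{\psi}\geq c\geq\tfrac{cd}{d+\|H_B\|}$. So the interesting case is when both components are nonzero.

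In that case, set $\ket{u}=\ket{\phi}/\|\phi\|$ and $\ket{w}=\ket{\phi^{\perp}}/\|\phi^{\perp}\|$; these are orthonormal (as $S\perp S^{\perp}$) and span a two-dimensional space $\mathcal{K}\ni\ket{\psi}$. Let $\Pi$ be the projector onto $\mathcal{K}$ and $M=\Pi(H_A+H_B)\Pi$, viewed as a $2\times2$ positive semidefinite matrix on $\mathcal{K}$; since $\ket{\psi}\in\mathcal{K}$, we have $\bra{\psi}(H_A+H_B)\ket{\psi}=\bra{\psi}M\ket{\psi}\geq\lambda_{\min}(M)$, so it suffices to lower-bound the smaller eigenvalue of $M$. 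In the basis $\{\ket{u},\ket{w}\}$ the $H_A$-part of $M$ is $\nu\,\den{w}$ with $\nu=\bra{w}H_A\ket{w}\geq d$ (because $H_A\ket{u}=0$ and $\ket{w}\in S^{\perp}$), while the $H_B$-part is a $2\times2$ positive semidefinite matrix whose $(u,u)$-entry $\mu=\bra{u}H_B\ket{u}$ is at least $c$ and whose operator norm is at most $\|H_B\|$. Writing the diagonal entries of $M$ as $\mu$ and $\nu+\rho$ and its off-diagonal entry as $\tau$, positive semidefiniteness of the $H_B$-block gives $|\tau|^{2}\leq\mu\rho$, hence $\det M=\mu\nu+\mu\rho-|\tau|^{2}\geq\mu\nu\geq c\nu>0$; and $M\leq\nu\,\den{w}+\|H_B\|\,\Pi$ gives $\lambda_{\max}(M)\leq\nu+\|H_B\|$. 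Therefore $\lambda_{\min}(M)=\det M/\lambda_{\max}(M)\geq\tfrac{c\nu}{\nu+\|H_B\|}$, which is increasing in $\nu$ and thus at least $\tfrac{cd}{d+\|H_B\|}$.

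The only genuinely nonroutine step is passing to the two-dimensional compression onto the span of the $S$- and $S^{\perp}$-components of $\ket{\psi}$: once there, the identity $\lambda_{\min}=\det/\lambda_{\max}$ for $2\times2$ matrices does all the work, neatly absorbing the troublesome off-diagonal term $\bra{\phi}H_B\ket{\phi^{\perp}}$ into the determinant. The rest is bookkeeping, and I expect the only care needed is in the degenerate cases and in checking that $\nu\geq d$ and $\mu\geq c$ really follow from $\gamma(H_A)\geq d$ and $\gamma(H_B|_{S})\geq c$ via orthogonality to the relevant nullspaces. Since the bound is due to \cite{AFH09}, one could alternatively just invoke that reference.
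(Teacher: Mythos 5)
Your proof is correct, and it is worth noting that the paper itself does not prove this lemma at all: it imports the bound from \cite{AFH09}, having proved only a weaker version (with denominator $c+d+\|H_B\|$) in \cite{BHQMA}. So your argument is not ``the same as the paper's''---it supplies a self-contained proof of the stronger quoted bound. I checked the steps: the identification $\ker(H_A+H_B)=\ker H_A\cap\ker H_B$ (valid by positive semidefiniteness) and its coincidence with the nullspace of $H_B|_S$ are right, and they justify both $\mu=\bra{u}H_B\ket{u}\geq c$ (since $\ket{u}\in S$ is orthogonal to $N$) and the reduction of $\gamma(H_A+H_B)$ to a Rayleigh-quotient bound over $N^{\perp}$; the degenerate cases use $c\leq\|H_B\|$ correctly; and in the main case the compression $M=\Pi(H_A+H_B)\Pi$ onto $\Span\{\ket{u},\ket{w}\}$ is positive semidefinite with $\det M\geq \mu\nu\geq c\nu$ (the off-diagonal term is controlled by $|\tau|^2\leq\mu\rho$, i.e.\ positive semidefiniteness of the compressed $H_B$ block) and $\lambda_{\max}(M)\leq\nu+\|H_B\|$, so $\lambda_{\min}(M)=\det M/\lambda_{\max}(M)\geq c\nu/(\nu+\|H_B\|)\geq cd/(d+\|H_B\|)$ by monotonicity in $\nu\geq d$. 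This determinant-over-largest-eigenvalue device is exactly what lets you keep the sharper denominator $d+\|H_B\|$; the more common Cauchy--Schwarz treatment of the cross term $\bra{\phi}H_B\ket{\phi^{\perp}}$ (as in \cite{BHQMA}) loses a $c$ in the denominator. Citing \cite{AFH09} would of course also suffice, as the paper does, but your argument is complete as written.
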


\section{Previous results}\label{sec:previous}

In this Section we summarize results from reference \cite{BHQMA} that are used in this paper.

In \cite{BHQMA} we considered the problem of approximating the ground energy of the Bose-Hubbard model on a graph at fixed particle number. We also considered a special case of this problem where the goal is to determine whether the Hamiltonian is close to being \emph{frustration free} (recall that in our setting, frustration freeness means $\lambda_N^1(G)=0$).

\begin{mdframed}
\begin{problem}
[\textbf{$\alpha$-Frustration-Free Bose-Hubbard Hamiltonian}]
We are given a $K$-vertex graph $G$, a number of particles $N\leq K$, and a precision parameter $\epsilon=\frac{1}{T}$. The integer $T \ge 4K$ is provided in unary; the graph is specified by its adjacency matrix, which can be any $K\times K$ symmetric $0$-$1$ matrix. We are promised that either $\lambda_{N}^{1}(G)\leq\epsilon^\alpha$ (yes instance) or $\lambda_{N}^{1}(G)\geq \epsilon+\epsilon^\alpha$ (no instance) and we are asked to decide which is the case. 
\end{problem}
\end{mdframed}

Here $T$ is given in unary so that the input size scales linearly with $T$.  An algorithm for this problem is considered efficient if it uses resources polynomial in $\frac{1}{\epsilon}=T$.

The positive integer $\alpha$ in the above definition parameterizes how close to frustration free the Hamiltonian is in the yes case. This definition slightly generalizes one presented in \cite{BHQMA}; the definition of the ``Frustration-Free Bose-Hubbard Hamiltonian'' problem given in that paper corresponds to the choice $\alpha=3$. Here it is convenient to explicitly define computational problems for each positive integer $\alpha$. The proof of QMA-completeness presented in reference \cite{BHQMA} applies to each $\alpha$ (as noted on page 11 of \cite{BHQMA}).

\begin{theorem}[\cite{BHQMA}]
For any positive integer $\alpha$, the problem $\alpha$-Frustration-Free Bose-Hubbard Hamiltonian is QMA-complete.
\label{thm:BHQMA_bounds}
\end{theorem}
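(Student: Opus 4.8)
The plan is to prove membership in QMA and QMA-hardness separately, and in each case to observe that nothing in the argument depends on $\alpha$ beyond the placement of an acceptance threshold. Membership is routine. A witness for a yes instance is a low-energy $N$-boson state, which lies in $\mathcal{Z}_N(G)\subseteq(\mathbb{C}^{|V(G)|})^{\otimes N}$ and hence fits on $N\lceil\log_2|V(G)|\rceil=\mathrm{poly}(T)$ qubits. The verifier checks that each of the $N$ registers encodes a valid vertex and, by a standard symmetry test (implementing the projection onto the symmetric subspace), that the witness may be taken to lie in $\mathcal{Z}_N(G)$; it then estimates $\bra{\psi}H(G,N)\ket{\psi}$ to additive error $\epsilon/4$ by standard means and accepts iff the estimate is at most $\epsilon^{\alpha}+\tfrac{\epsilon}{2}$. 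Since $H(G,N)$ has norm at most $N|V(G)|+N(N-1)+N|\mu(G)|\le 3|V(G)|^2=\mathrm{poly}(T)$, is a sum of $\mathrm{poly}(|V(G)|)$ terms each acting on at most two registers, and the two cases of the promise are separated by a gap of exactly $\epsilon=1/T=1/\mathrm{poly}(T)$, the verification runs with bounded error in time $\mathrm{poly}(T)$; the exponent $\alpha$ affects only where the threshold sits, not the size of this gap.

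Hardness is the substance of \cite{BHQMA}, which I would invoke essentially as is. Beginning from an arbitrary QMA verification circuit and amplifying it in the usual way, the gate-graph construction of \cite{BHQMA} --- whose building block is the $128$-vertex graph $g_0$, with copies wired together according to a gate diagram as reviewed in \sec{prelim} --- produces in polynomial time a gate graph $G$ and a particle number $N$ with the following properties: if the circuit has an accepting witness, then the $N$-particle Bose-Hubbard Hamiltonian $H(G,N)$ is frustration free, so $\lambda_N^1(G)=0$; and if the circuit rejects every witness, then $\lambda_N^1(G)\ge\delta$ for an inverse polynomial $\delta$ that depends on the circuit but not on $\alpha$. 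The lower bound $\delta$ is the technical heart of the argument: one writes $H(G,N)$ --- together with the penalty terms that enforce the intended circuit structure on the gate graph --- as a sum of positive semidefinite pieces and propagates a spectral gap from the level of $g_0$ up to the full Hamiltonian by repeated application of \lem{npl} and \fct{variation_gap}.

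It remains only to fit these two bounds into the promise for a fixed value of $\alpha$, which is immediate. Because the reduction is free to choose the unary parameter $T$ subject only to $T\ge 4|V(G)|$, pick $T$ to be a polynomial in the circuit size large enough that $\epsilon+\epsilon^{\alpha}=T^{-1}+T^{-\alpha}\le\delta$. Then a circuit with an accepting witness yields $\lambda_N^1(G)=0\le\epsilon^{\alpha}$ (a yes instance) and a circuit with no accepting witness yields $\lambda_N^1(G)\ge\delta\ge\epsilon+\epsilon^{\alpha}$ (a no instance), so the reduction is correct for every $\alpha$. The one genuine obstacle is the no-case spectral bound in the hardness direction --- establishing $\lambda_N^1(G)\ge 1/\mathrm{poly}$ whenever the encoded computation has no accepting witness --- which is exactly the main technical achievement of \cite{BHQMA} and which I would not reproduce here; accommodating an arbitrary positive integer $\alpha$, by contrast, costs nothing, since the yes-case energy is $0$ while the no-case energy is a fixed inverse polynomial, so one needs only that $\epsilon^{\alpha}$ and $\epsilon+\epsilon^{\alpha}$ bracket a gap of size $\epsilon$ (true for all $\alpha$) and that $\epsilon$ be small enough to sit below $\delta$.
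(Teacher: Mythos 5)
Your membership sketch and the overall shape of the hardness reduction are fine, but there is a genuine error at the heart of your hardness argument: you assert that if the verification circuit has an accepting witness then $H(G,N)$ is exactly frustration free, i.e.\ $\lambda_N^1(G)=0$, and you then lean on this (``the yes-case energy is $0$ while the no-case energy is a fixed inverse polynomial'') to conclude that accommodating an arbitrary $\alpha$ costs nothing. This is not what the construction of \cite{BHQMA} gives, and it cannot be, short of proving $\mathrm{QMA}=\mathrm{QMA}_1$ for the relevant gate set: a QMA verifier accepts its best witness with probability $1-p_{\mathrm{err}}$ with $p_{\mathrm{err}}>0$, and the ground energy of the circuit-to-graph Hamiltonian in the yes case is only upper bounded by a quantity proportional to $p_{\mathrm{err}}$ (times/divided by polynomial factors), not by zero. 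This is precisely why the problem is stated with yes-threshold $\epsilon^{\alpha}$ rather than $0$, and why the parameter $\alpha$ is not vacuous: to meet the threshold $\epsilon^{\alpha}$ for a given $\alpha$ one must amplify the verifier so that its completeness error (and hence the yes-case energy bound) falls below $\epsilon^{\alpha}$. The remark the present paper relies on --- that the proof of \cite{BHQMA} ``applies to each $\alpha$'' --- is exactly the observation that standard amplification drives the completeness error (and so the yes-case energy) below any fixed inverse polynomial, so the same construction works for every constant $\alpha$; it is not the claim that the yes-case energy vanishes.

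The fix is local but necessary: replace ``frustration free, so $\lambda_N^1(G)=0$'' by ``$\lambda_N^1(G)\le c\,p_{\mathrm{err}}\cdot\mathrm{poly}$, which after amplifying the circuit to exponentially small completeness error is at most $\epsilon^{\alpha}$,'' and note explicitly that the amount of amplification (or at least the verification that exponential amplification suffices) is where $\alpha$ enters the argument. With that correction your choice of $T$ and the no-case bound go through as you describe; note also that the paper itself does not reprove this theorem but imports it from \cite{BHQMA} with exactly this observation about $\alpha$.
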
 

In fact, the results of reference \cite{BHQMA} can be used to show that each of these problems is QMA-hard when restricted to a certain subset of instances. In particular,

\begin{corollary}
\label{cor:specialcase}
For any positive integer $\alpha$, the problem $\alpha$-Frustration-free Bose-Hubbard Hamiltonian remains QMA-hard if we additionally promise that the graph $G$ is an $e_1$-gate graph described by a gate diagram with $R$ diagram elements,  satisfying 
\begin{equation}
\gamma(A(G)-e_1)\geq \frac{C_0}{R^3}
\label{eq:condition}
\end{equation}
where $C_0$ is an absolute constant.
\label{cor:gap}
\end{corollary}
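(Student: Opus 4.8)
The plan is to trace through the QMA-completeness proof of \thm{BHQMA_bounds} as presented in \cite{BHQMA} and verify that the hard instances it produces can be taken to satisfy the additional promises claimed here. Recall that the QMA-hardness direction in \cite{BHQMA} is established by a reduction from an arbitrary QMA verification circuit: given a circuit one constructs a gate diagram (with some number of diagram elements $R$ that is polynomial in the circuit size and the relevant precision parameter), and the associated gate graph $G$ is used as the instance of the Bose-Hubbard problem. So the first step is simply to recall that the reduction in \cite{BHQMA} already outputs a gate graph, and in fact an $e_1$-gate graph: the whole point of the gate-graph machinery is that the hopping term $A(G)$ has smallest eigenvalue exactly $e_1$, with the ground space $\mathcal{Y}$ of \eq{Yspace} encoding the logical computation. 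Thus the promise ``$G$ is an $e_1$-gate graph described by a gate diagram with $R$ diagram elements'' is automatic, and $R$ is the quantity produced by the reduction.

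The second and substantive step is to verify the spectral-gap condition \eq{condition}, i.e.\ that $\gamma(A(G) - e_1) \geq C_0/R^3$ for an absolute constant $C_0$. This is a lower bound on the gap above the ground space $\mathcal{Y}$ of the gate-graph hopping Hamiltonian $A(G)$. I would obtain it by invoking the Nullspace Projection Lemma (\lem{npl}) together with the structural decomposition \eq{AG}: writing $H_A = 1_q\otimes(A(g_0) - e_1)$ (whose nullspace $S$ contains $\mathcal{Y}$) and $H_B = h_{\mathcal{S}} + h_{\mathcal{E}}$, one needs (a) a constant lower bound on $\gamma(H_A)$ — this is the gap of the fixed $128$-vertex graph $g_0$ above its four-dimensional ground eigenspace, an absolute constant computable from $A(g_0)$; (b) a bound $\|H_B\| = O(\mathrm{poly}(R))$, which is clear since $h_{\mathcal{S}}$ and $h_{\mathcal{E}}$ are sums of $O(R)$ projectors of norm $O(1)$; and (c) a lower bound of the form $\gamma(H_B|_S) \geq \Omega(1/R^2)$ on the gap of the self-loop/edge penalty restricted to the ``$R$ independent copies of $g_0$'' ground space. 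Part (c) is the crux: $H_B|_S$ is (up to the harmless third register) essentially a weighted graph Laplacian-type operator on the node structure $\mathcal{S}\cup\mathcal{E}$ of the gate diagram, acting on the $4R$-dimensional space spanned by the $|\psi^q_{z,a}\rangle$, and a $1/R^2$-type bound on its smallest nonzero eigenvalue follows from a path/canonical-paths or Cheeger-type argument on the associated connectivity graph of the diagram, exactly as carried out in \cite{BHQMA}. Combining (a)–(c) via \eq{npl_lower_bnd} gives $\gamma(A(G)-e_1) \geq \frac{c\,d}{d + \|H_B\|} = \Omega\!\bigl(\tfrac{1/R^2}{1/R^2 + \mathrm{poly}(R)}\bigr)$, which is $\Omega(1/R^3)$ after a (routine) accounting of the polynomial degrees; absorbing the implied constant into $C_0$ gives \eq{condition}.

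Finally, one must check that imposing these extra promises does not destroy QMA-hardness: since \emph{every} instance produced by the reduction of \cite{BHQMA} already satisfies them, restricting the instance set to those obeying \eq{condition} and the gate-graph conditions leaves the image of the reduction intact, so the reduction still witnesses QMA-hardness of the restricted problem. (Containment in QMA is irrelevant here — we only assert QMA-hardness.) The one genuine obstacle is step (c)/the bookkeeping in step 2: extracting the explicit $R^{-3}$ scaling requires going back into the detailed gap analysis of \cite{BHQMA}, pinning down which lemmas there give which polynomial powers of $R$ (the analysis there is phrased in terms of a precision parameter rather than $R$ directly), and confirming that the constants in $g_0$'s spectral gap and in $\|H_B\|$ are genuinely absolute. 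I expect the write-up to consist mainly of citing the relevant lemmas of \cite{BHQMA} and assembling them through \lem{npl}, rather than any new argument.
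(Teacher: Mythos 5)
Your first step (the hard instances output by the reduction of \cite{BHQMA} are already $e_1$-gate graphs, so the extra structural promise costs nothing) matches the paper. The genuine gap is your step (c). You reduce everything, via a single application of \lem{npl} with $H_A=1_q\otimes(A(g_0)-e_1)$ and $H_B=h_{\mathcal{S}}+h_{\mathcal{E}}$, to the claim that $\gamma\bigl((h_{\mathcal{S}}+h_{\mathcal{E}})|_{\mathcal{Y}}\bigr)=\Omega(1/R^2)$ ``exactly as carried out in \cite{BHQMA}'' by a Cheeger or canonical-path argument on the connectivity graph of the diagram. No such global, one-shot bound is proved in \cite{BHQMA}, and it is not a routine expander-style fact: in the basis $|\psi^q_{z,a}\rangle$ the restricted operator is not a plain graph Laplacian but a unitary-twisted one (its off-diagonal entries carry the $1$, $H$, $HT$ labels of the diagram elements), its nullspace and gap depend on how these unitaries compose around the diagram, and for the occupancy-constraint part of the hard instances the relevant restricted matrix elements are additionally suppressed by a factor $\Theta(1/R)$ (equation (C.51) of \cite{BHQMA}). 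So the step you yourself flag as ``the crux'' is asserted, not established, and the quantitative claim you would need is precisely what requires a new argument. (A smaller inconsistency: $\|h_{\mathcal{S}}+h_{\mathcal{E}}\|\le 3$ since each node carries at most one edge or self-loop, so your ``$\mathrm{poly}(R)$ norm'' bookkeeping, which is what turns your claimed $1/R^2$ into $1/R^3$, does not cohere as written.)

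The paper avoids this by mirroring the two-stage construction of the hard graphs $G_X^{\square}$. First it proves $\gamma(A(G_X)-e_1)\ge D$ for an absolute constant $D$: write $A(G_X)=A(G_1)+h_{\mathcal{E}'}+h_{\mathcal{S}'}$ where $G_1$ is a disjoint union of constant-size gadgets (constant gap), note that the penalties restricted to the ground space of $G_1$ are block diagonal with blocks of size at most two drawn from a fixed finite set (the explicit matrix elements computed in Appendix E of \cite{BHQMA}), and apply \lem{npl} with $\|h_{\mathcal{E}'}+h_{\mathcal{S}'}\|\le 3$. Second, it writes $A(G_X^{\square})=A(G_X^{\triangle})+h_{\mathcal{E}_0}+h_{\mathcal{S}_0}$ and applies \lem{npl} again, using Lemma 19 of \cite{BHQMA} ($\gamma(A(G_X^{\triangle})-e_1)\ge 1/(30R')^2$) and equation (C.51) (the restricted penalties equal $(h_{\mathcal{E}}+h_{\mathcal{S}})|_Q$ rescaled by $\approx 1/(3R')$), together with \fct{variation_gap}, to get $\gamma\bigl((h_{\mathcal{E}_0}+h_{\mathcal{S}_0})|_{Q^{\triangle}}\bigr)\ge D/(4R')$. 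The $R^{-3}$ in \eq{condition} is exactly $R^{-2}$ (gap of $G^{\triangle}$) times $R^{-1}$ (the C.51 rescaling), with the penalty norm being $O(1)$. To repair your write-up you would either have to supply a genuine direct spectral analysis of $(h_{\mathcal{S}}+h_{\mathcal{E}})|_{\mathcal{Y}}$ for the full diagrams, or restructure the argument along the paper's two-stage lines.
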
 
In \app{cor} we prove this Corollary using the results of \cite{BHQMA}.

Using a reduction based on the connection between spins and hard-core bosons, we established QMA-completeness of the following problem.

\begin{mdframed}
\begin{problem}
[\textbf{XY Hamiltonian with local magnetic fields}]
We are given a $K$-vertex graph $G$, an integer $N\leq K$, a real number $c$, and a precision parameter $\epsilon=\frac{1}{T}$. The positive integer $T$ is provided in unary; the graph is specified by its adjacency matrix, which can be any $K\times K$ symmetric $0$-$1$ matrix. We are promised that either $\theta_N(G)\leq c$ (yes instance) or else $\theta_N(G)\geq c+\epsilon$ (no instance) and we are asked to decide which is the case.
\end{problem}
\end{mdframed}

\begin{theorem}[\cite{BHQMA}] 
\label{thm:reduction}
XY Hamiltonian with local magnetic fields is QMA-complete. Moreover, there is a direct reduction that maps an instance of $3$-Frustration-Free Bose-Hubbard Hamiltonian specified by $G$, $N$, and $\epsilon$, to an instance of XY Hamiltonian specified by $G$, $N$, $\epsilon'=\frac{\epsilon}{4}$, and $c=N\mu(G)+\frac{\epsilon}{4}$, with the same solution.
\end{theorem}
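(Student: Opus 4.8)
I would prove the theorem in two parts: membership of XY Hamiltonian with local magnetic fields in QMA, and correctness of the stated reduction from $3$-Frustration-Free Bose-Hubbard Hamiltonian, which together with \thm{BHQMA_bounds} (at $\alpha=3$) gives QMA-hardness. The backbone is the identity recalled after \eq{HGN}: under the identification of the hard-core subspace $\mathcal{W}\subseteq\mathcal{Z}_N(G)$ with $\mathrm{Wt}_N$, one has $H_G^N|_{\mathcal{W}}=O_G|_{\mathrm{Wt}_N}$, so $\theta_N(G)$ is the least eigenvalue of $H_G^N$ on $\mathcal{W}$; since $\mathcal{W}\subseteq\mathcal{Z}_N(G)$ this already forces $\theta_N(G)\ge\lambda_{\min}(\bar H_G^N)=N\mu(G)+\lambda_N^1(G)$, an inequality I will reuse.

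For membership in QMA: on input $(G,N,c,\epsilon)$ with $T=1/\epsilon$ in unary, the verifier receives a witness on $|V(G)|$ qubits, coherently computes its Hamming weight into an ancilla register, measures the ancilla, and rejects unless the outcome equals $N$; conditioned on not rejecting, the state lies in $\mathrm{Wt}_N$. It then estimates $\bra{\psi}O_G\ket{\psi}$ to additive precision $\epsilon/4$ by standard phase estimation for the $2$-local, polynomially-norm-bounded, efficiently simulable Hamiltonian $O_G$ (which preserves $\mathrm{Wt}_N$), accepting iff the estimate is at most $c+\epsilon/2$. If $\theta_N(G)\le c$, a ground state of $O_G|_{\mathrm{Wt}_N}$ is accepted with high probability; if $\theta_N(G)\ge c+\epsilon$, every state in $\mathrm{Wt}_N$ has $O_G$-energy at least $c+\epsilon$, so any post-measurement state is rejected with high probability. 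All operations run in time polynomial in $1/\epsilon=T$, hence in the input size, and standard amplification boosts the gap.

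For the reduction, send $(G,N,\epsilon)$ to the XY instance $(G,N,\epsilon'=\epsilon/4,\,c=N\mu(G)+\epsilon/4)$, computing $\mu(G)$ (the least eigenvalue of the $|V(G)|\times|V(G)|$ symmetric $0$-$1$ matrix $A(G)$) to sufficient precision in polynomial time. A Bose-Hubbard no instance ($\lambda_N^1(G)\ge\epsilon+\epsilon^3$) maps to an XY no instance, since $\theta_N(G)\ge N\mu(G)+\epsilon>c+\epsilon'$ by the inequality above. The content is the yes case: given $\lambda_N^1(G)\le\epsilon^3$, I must exhibit a state of $\mathrm{Wt}_N$ with $O_G$-energy at most $c$. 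Take a ground state $\ket{\phi}\in\mathcal{Z}_N(G)$ of $\bar H_G^N$, and set $A'=\sum_{i=1}^N A(G)^{(i)}-N\mu(G)\ge0$ and $U=\sum_k\hat n_k(\hat n_k-1)\ge0$, so that $\bra{\phi}A'\ket{\phi}+\bra{\phi}U\ket{\phi}=\lambda_N^1(G)\le\epsilon^3$. On $\mathcal{Z}_N(G)$ one has $U\ge 2(1-\Pi)$, where $\Pi$ projects onto $\mathcal{W}$ — both operators are diagonal in the occupation-number basis, and any configuration with a doubly occupied vertex contributes at least $2$ to $U$ — so $\|\Pi\ket{\phi}\|^2\ge 1-\epsilon^3/2>0$. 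Put $\ket{\phi_{\mathcal{W}}}=\Pi\ket{\phi}/\|\Pi\ket{\phi}\|\in\mathcal{W}\cong\mathrm{Wt}_N$; since $U\ket{\phi_{\mathcal{W}}}=0$, its $O_G$-energy is $N\mu(G)+\bra{\phi_{\mathcal{W}}}A'\ket{\phi_{\mathcal{W}}}$, so it suffices to show $\bra{\phi_{\mathcal{W}}}A'\ket{\phi_{\mathcal{W}}}\le\epsilon/4$.

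This last bound is the main obstacle, since projecting onto $\mathcal{W}$ can in principle raise the $A'$-energy. I would control it in the positive semidefinite form $A'$: writing $\ket{\phi}=\cos\alpha\ket{\phi_{\mathcal{W}}}+\sin\alpha\ket{\phi^{\perp}}$ with $\sin^2\alpha\le\epsilon^3/2$, and using $\|\sqrt{A'}\ket{\phi}\|\le\epsilon^{3/2}$, the triangle inequality gives $\cos\alpha\,\|\sqrt{A'}\ket{\phi_{\mathcal{W}}}\|\le\epsilon^{3/2}+\sin\alpha\,\|A'\|^{1/2}$. Now $\|A'\|\le 2N\|A(G)\|\le 2|V(G)|^2$ because $A(G)$ is a $0$-$1$ matrix on $|V(G)|$ vertices, and the promise $T\ge 4|V(G)|$ forces $\epsilon\le 1/(4|V(G)|)$; substituting, $\bra{\phi_{\mathcal{W}}}A'\ket{\phi_{\mathcal{W}}}=\|\sqrt{A'}\ket{\phi_{\mathcal{W}}}\|^2$ is $O(|V(G)|^2\epsilon^3)=O(\epsilon)$, and chasing the constants through (this is exactly why the reduction is stated with $\alpha=3$, $\epsilon'=\epsilon/4$, and the offset $\epsilon/4$ in $c$, not with $\epsilon$) yields $\bra{\phi_{\mathcal{W}}}A'\ket{\phi_{\mathcal{W}}}\le\epsilon/4$. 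Then $\ket{\phi_{\mathcal{W}}}$ witnesses $\theta_N(G)\le N\mu(G)+\epsilon/4=c$, completing the yes case and hence QMA-hardness; with the first part this gives QMA-completeness of XY Hamiltonian with local magnetic fields. I expect the fiddly point to be precisely this constant-counting: one must check that the $\epsilon^{-2}$ growth of $\|A'\|\epsilon^3$ under the unary-size promise genuinely stays below $\epsilon/4$, which may need $N<|V(G)|$ or a sharper estimate of $\lambda_{\max}(A(G))-\mu(G)$.
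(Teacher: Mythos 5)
Your proposal is sound, but note that the paper does not actually prove this statement: it is imported verbatim from reference \cite{BHQMA}, whose Appendix~B contains the reduction, so the only ``proof'' here is a citation. Your reconstruction follows what is essentially the intended argument: QMA-containment by a Hamming-weight check plus phase estimation, the no case from $\theta_N(G)\geq N\mu(G)+\lambda_N^1(G)$ (restriction of $H_G^N$ to the hard-core subspace $\mathcal{W}\subseteq\mathcal{Z}_N(G)$ can only raise the minimum), and the yes case by projecting a near-frustration-free ground state onto $\mathcal{W}$ and bounding the hopping-energy increase variationally. Your one flagged worry, the constant count, does in fact close: with $\sin^2\alpha\leq\epsilon^3/2$, $\|A'\|^{1/2}\leq\sqrt{2}\,K$ and $K\leq 1/(4\epsilon)$ one gets $\bra{\phi_{\mathcal{W}}}A'\ket{\phi_{\mathcal{W}}}\leq \epsilon^3(1+K)^2/(1-\epsilon^3/2)\leq \epsilon(1+4\epsilon)^2/(16(1-\epsilon^3/2))<\epsilon/4$ once $K\geq 2$ (so $\epsilon\leq 1/8$), and the $K=1$ case is degenerate (no interaction term, $\mathcal{W}=\mathcal{Z}_N$), so no sharper estimate of $\lambda_{\max}(A(G))-\mu(G)$ is needed. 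The only detail worth tightening is the remark about ``computing $\mu(G)$ to sufficient precision'': the reduction as stated in the theorem outputs the exact real number $c=N\mu(G)+\epsilon/4$ (which is how the problem is phrased, with $c$ given as a real number), so either one takes that at face value as in \cite{BHQMA} or one notes that the promise gap of $\epsilon/4$ tolerates an inverse-polynomial rounding of $\mu(G)$; as written, your sentence leaves that robustness implicit.
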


This reduction is presented in Appendix B of \cite{BHQMA}. For our purposes it is crucial that the reduction does not change the underlying interaction graph.  In this paper we show that the $\alpha$-Frustration-Free Bose-Hubbard Hamiltonian on \emph{simple graphs} is QMA-complete, and then we use the above reduction to show that the XY model (on simple graphs, i.e., without local magnetic fields) is QMA-complete.


\section{Adding Self-Loops}
\label{sec:adding_self_loops}

In general, a gate graph is not a simple graph since it may have self-loops.
From equations \eq{HS} and \eq{HE}, we see that self-loops
in the gate graph $G$ arise from both self-loops and edges in its
gate diagram. An edge or a self-loop in the gate diagram is associated
with $16$ self-loops or $8$ self-loops in $G$, respectively. In
this Section we describe a mapping from any $e_{1}$-gate graph $G$
to a modified graph $G^{\mathrm{SL}}$. The graph $G^{\mathrm{SL}}$ is
not a gate graph. It is designed so that it has a self-loop on each of its vertices. We prove
that certain properties of $G^{\mathrm{SL}}$ are related to those of
$G$. Ultimately our goal is to establish a relationship between the ground energies of the Bose-Hubbard models on these graphs.

\subsection{Definition of $G^{\mathrm{SL}}$}

Consider an $e_{1}$-gate graph $G$ described by a gate diagram with $R$ diagram elements and edge and self-loop sets $\mathcal{E}$ and $\mathcal{S}$, respectively.

Define $\mathcal{N}$ to be the set of vertices without self-loops in $G$, i.e.,
\begin{equation}
  \mathcal{N}  = \{ (q,z,t,j) : j\in \{0,\ldots,7\},\, (q,z,t) \notin \mathcal{S} \text{, and } \forall (q',z',t'),\, \{(q,z,t),(q',z',t')\}\notin \mathcal{E}\}.
\label{eq:Ndef}
\end{equation}
Note that $\mathcal{N}$ contains $(q,z,t,j)$ if and only if it contains $(q,z,t,i)$ for all $i\in[8]$, and further that for each $q$, there exists some $t^\star$ such that $(q,z,t^\star,j) \in \mathcal{N}$ for all $z\in \{0,1\}$ and $j\in\{0,\ldots,7\}$.

The vertex set of $G^{\rm SL}$ is two copies of the vertex set of $G$. We label the vertices of $G^{\rm SL}$ as
\[
V(G^{\rm SL}) =
\{(q,z,t,j,d) : q\in[R],\, t\in[8],\, j\in \{0,\ldots,7\},\, z,d \in\{0,1\}\}.
\]
We define $G^{\rm SL}$ by its adjacency matrix 
\begin{equation}
  A(G^{\rm SL}) = A(G) \otimes 1_d+ 2\,\Pi_{\mathcal{N}}\otimes \Pi_+
	\label{eq:adjacency_SL}
	\end{equation}
where
\[
\Pi_{\mathcal{N}} = \sum_{v\in \mathcal{N}} \den{v} \qquad \Pi_+ = \sden{+}=\frac{1}{2}(\ket{0}+\ket{1})(\bra{0}+\bra{1}).
\]
The first term of \eq{adjacency_SL} is just two copies of $G$; the second term adds edges between the two copies as well as self-loops to both copies. Note that every vertex of $G^{\rm SL}$ has a self-loop. 

\subsection{Relationship between the adjacency matrices of $G$ and $G^{\rm SL}$}

Since $A(G^{\rm SL})$ commutes with $1_q\otimes1_z\otimes 1_t\otimes 1_j \otimes \Pi_+$, there is an eigenbasis for $A(G^{\rm SL})$  where each vector is of the form $\ket{\phi}|+\rangle$ or $\ket{\phi}|-\rangle$. For states of the latter form,  the second term in \eq{adjacency_SL} vanishes. From this we see that, if $\sket{\psi}$ is in the ground space of $A(G)$, then $\ket{\psi}\ket{-}$ is in the ground space of $A(G^{\rm SL})$ and its ground energy is $e_1$. We now prove that these states are a basis for the ground space.

\begin{lemma}\label{lem:GSL_ground space} Let $G$ be an $e_1$-gate graph and let $\mathcal{F}$ be the ground space of $A(G)$. Let
\begin{equation}
\mathcal{F}_{\pm}=\{|\alpha\rangle|\pm\rangle: |\alpha\rangle\in \mathcal{F}\}.\label{eq:fpm}
\end{equation}
Then the ground space of $A(G^{\rm SL})$ is $\mathcal{F}_-$, and furthermore, 
\begin{equation}
(2\,\Pi_\mathcal{N}\otimes \Pi_+)\big|_{\mathcal{F}_+} \geq \frac{1}{4}. \label{eq:Fplus}
\end{equation}
\end{lemma}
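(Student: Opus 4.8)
The plan is to reduce the entire lemma to a single operator inequality, $\Pi_{\mathcal{N}}|_{\mathcal{F}} \geq \tfrac18$ (as a quadratic form on $\mathcal{F}$), and then to prove that inequality using the block structure of the space $\mathcal{Y}$ together with the uniform ``weight'' of the ground states of $A(g_0)$ across the $t$-register.

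First I would handle the reduction. Writing $A(G^{\mathrm{SL}}) - e_1 = (A(G) - e_1)\otimes 1_d + 2\,\Pi_{\mathcal{N}}\otimes\Pi_+$, both summands are positive semidefinite, and since $\mathcal{F}_-$ already attains energy $e_1$ (as observed just before the lemma), the ground energy of $A(G^{\mathrm{SL}})$ equals $e_1$ and its ground space is the common nullspace of the two summands. The first summand has nullspace $\mathcal{F}\otimes\mathbb{C}^2$, so a ground state must have the form $|\alpha_+\rangle|+\rangle + |\alpha_-\rangle|-\rangle$ with $|\alpha_\pm\rangle\in\mathcal{F}$; vanishing of the second summand on it gives $\Pi_{\mathcal{N}}|\alpha_+\rangle = 0$, hence $\langle\alpha_+|\Pi_{\mathcal{N}}|\alpha_+\rangle = 0$, and the claimed inequality then forces $|\alpha_+\rangle = 0$, so the ground space is exactly $\mathcal{F}_-$. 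Inequality \eq{Fplus} is then immediate, since $\Pi_+$ acts as the identity on $\mathcal{F}_+$, so the restriction of $2\,\Pi_{\mathcal{N}}\otimes\Pi_+$ there is just $2\,\Pi_{\mathcal{N}}|_{\mathcal{F}} \geq \tfrac14$.

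Next I would prove $\Pi_{\mathcal{N}}|_{\mathcal{F}} \geq \tfrac18$. Since every ground state of an $e_1$-gate graph lies in $\mathcal{Y}$, we have $\mathcal{F}\subseteq\mathcal{Y}$, so it suffices to lower bound $\langle\alpha|\Pi_{\mathcal{N}}|\alpha\rangle$ for a unit vector $|\alpha\rangle\in\mathcal{Y}$. Using the observation following \eq{Ndef} — that for each $q$ there is a $t^\star_q$ with all vertices $(q,z,t^\star_q,j)$ lying in $\mathcal{N}$ — I would note $\Pi_{\mathcal{N}} \geq \sum_q\Pi^{(q,t^\star_q)}$, where $\Pi^{(q,t^\star_q)}$ projects onto the span of those vertices (these ranges are mutually orthogonal over $q$ and contained in $\mathcal{N}$). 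Expanding $|\alpha\rangle = \sum_{q,z,a}c_{q,z,a}|\psi^q_{z,a}\rangle$ in the orthonormal basis of $\mathcal{Y}$, only the block with matching $q$ survives the projection, and $\Pi^{(q,t^\star_q)}|\alpha\rangle = |q\rangle\otimes P_{t^\star_q}\big(\sum_{z,a}c_{q,z,a}|\psi_{z,a}\rangle\big)$, where $P_t$ projects the $t$-register onto $|t\rangle$. The key point, which I would verify by a short case analysis on the type of $t$ (the slot attached to gate $1$, $H$, or $HT$) in \eq{psi1}--\eq{psi2}, is that for every $t$ the four vectors $P_t|\psi_{z,a}\rangle$ are mutually orthogonal with common norm $1/\sqrt{8}$: orthogonality across $a$ from $\langle\omega|\omega^*\rangle = 0$, orthogonality across $z$ (same $a$) from unitarity of the relevant single-qubit gate together with $\langle z|z'\rangle = 0$, and the norm from the fact that each $t$-slice of $|\psi_{z,a}\rangle$ carries weight exactly $\tfrac18$. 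Granting this, $\|\Pi^{(q,t^\star_q)}|\alpha\rangle\|^2 = \tfrac18\sum_{z,a}|c_{q,z,a}|^2$, and summing over $q$ yields $\langle\alpha|\Pi_{\mathcal{N}}|\alpha\rangle \geq \tfrac18$.

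I expect the reduction in the second paragraph and all the bookkeeping with positive semidefinite operators to be routine. The one genuinely computational step is the verification that the $g_0$ ground states are ``flat'' in the $t$-register and that distinct $|\psi_{z,a}\rangle$ keep orthogonal $t$-slices; this is short but must be done with some care because the first register of $|\psi_{z,a}\rangle$ holds different (and, for $a=1$, conjugated) single-qubit gate outputs in different $t$-slots.
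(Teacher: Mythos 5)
Your proposal is correct and follows essentially the same route as the paper: both reduce the lemma to the bound $\Pi_{\mathcal{N}}\geq\frac18$ on the relevant ground space (contained in $\mathcal{Y}$), and both prove it using the column $t^\star$ of $\mathcal{N}$-vertices in each diagram element together with the fact that the $t^\star$-slices of the four states $\sket{\psi_{z,a}}$ are mutually orthogonal with squared norm $\frac18$ (the paper phrases this as the matrix elements $\frac18\delta_{z,x}\delta_{a,b}$ of $\sket{q_0}\sbra{q_0}\otimes 1_z\otimes\sden{t^\star}\otimes 1_j$ in the $q_0$ block, plus a positive remainder $B$). Your handling of the ground-space identification via the common nullspace of the two positive semidefinite summands is a minor repackaging of the paper's argument using commutation with $1\otimes\Pi_+$, and it is sound.
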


\begin{proof}
To prove \eq{fpm} it suffices to show that no state of the form $\ket{\beta}\ket{+}$ is in the ground space of $A(G^{\rm SL})$ (since $A(G^{\rm SL})$ commutes with $1_q\otimes1_z\otimes 1_t\otimes 1_j \otimes \Pi_+$).  

Suppose (to reach a contradiction) that $\ket{\beta}\ket{+}$ is in the ground space. Since the ground energy of $A(G^{\rm SL})$ is equal to $e_1$, we must have
\[
\bra{\beta}\bra{+} A(G^{\rm SL}) \ket{\beta}\ket{+}=\bra{\beta}A(G) \ket{\beta}=e_1
\]
and hence 
\begin{equation}
\bra{\beta}\Pi_\mathcal{N} \ket{\beta}=0.
\label{eq:PN}
\end{equation}
Now, since $\ket{\beta}$ is in the $e_1$-energy ground space of $A(G)$, it is contained in the space $\mathcal{Y}$ defined in equation \eq{Yspace}. We now show that 
\begin{equation}
\Pi_\mathcal{N} \big|_{\mathcal{Y}}\geq \frac{1}{8},
\label{eq:pos}
\end{equation}
which contradicts \eq{PN}; this will show that no such state $\ket{\beta}\ket{+}$ exists. It also establishes \eq{Fplus}.

To establish \eq{pos}, first observe that for $q\neq s$ we have $\langle\psi^{s}_{x,b}|\Pi_\mathcal{N}|\psi^{q}_{z,a}\rangle=0$, so the operator $\Pi_\mathcal{N} |_{\mathcal{Y}}$ is block diagonal with a block for each $q\in [R]$ (each block is a $4\times 4 $ principal submatrix corresponding to the subspace spanned by the states $|\psi^q_{z,a}\rangle$ with $z,a\in \{0,1\}$). It is therefore sufficient to establish \eq{pos} for each block individually. 

Focus on the block labeled by some $q_0 \in [R]$. Now we use the fact about $\mathcal{N}$ that is noted in the text following equation \eq{Ndef}, namely, that there exists some $t^\star$  such that $(q_0,z,t^\star,j) \in \mathcal{N}$ for all $z\in \{0,1\}$ and $j\in\{0,\ldots,7\}$.  Using this fact we can write 
\[
\Pi_\mathcal{N}=|q_0\rangle\langle q_0|\otimes 1_z\otimes |t^\star\rangle\langle t^\star|\otimes 1_j+B
\]
for some positive semidefinite operator $B$. To finish the proof of equation \eq{pos}, we show that the first term on the right-hand side is strictly positive within the block labeled by $q_0$, which follows from
\[
\langle \psi^{q_0}_{x,b}| \bigl(|q_0\rangle\langle q_0|\otimes 1_z\otimes |t^\star\rangle\langle t^\star|\otimes 1_j\bigr)|\psi^{q_0}_{z,a}\rangle=\langle \psi_{x,b}|1_z\otimes |t^\star\rangle\langle t^\star|\otimes 1_j|\psi_{z,a}\rangle=\frac{1}{8}\delta_{z,x}\delta_{a,b},
\]
where in the last equality we used equations \eq{psi1} and \eq{psi2}. 
\end{proof}

\begin{lemma} Let $G$ be an $e_1$-gate graph. Then
\[
\gamma(A(G^{\rm SL})-e_1)\geq C \, \gamma(A(G)-e_1)
\]
where $C$ is an absolute constant.
\label{lem:gap_lem}
\end{lemma}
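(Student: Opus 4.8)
The plan is to write $A(G^{\rm SL})-e_1$ as a sum of two positive semidefinite operators and feed it to the Nullspace Projection Lemma (\lem{npl}). Subtracting $e_1$ from \eq{adjacency_SL} gives
\[
A(G^{\rm SL})-e_1 \;=\; H_A+H_B,\qquad H_A \;=\; (A(G)-e_1)\otimes 1_d,\qquad H_B\;=\;2\,\Pi_\mathcal{N}\otimes\Pi_+,
\]
where $H_A\geq 0$ since $A(G)\succeq e_1$ (as $G$ is an $e_1$-gate graph) and $H_B\geq 0$ trivially. The nullspace of $H_A$ is $S=\mathcal{F}\otimes\mathbb{C}^2=\mathcal{F}_+\oplus\mathcal{F}_-$, where $\mathcal{F}$ is the ground space of $A(G)$ and $\mathcal{F}_\pm$ is as in \eq{fpm}.

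To apply \lem{npl} I would check its two hypotheses. First, because $H_B$ commutes with the projector onto the $\ket{+}$/$\ket{-}$ components of the last register, $H_B|_S$ is block diagonal with respect to $\mathcal{F}_+\oplus\mathcal{F}_-$; it annihilates $\mathcal{F}_-$, and on $\mathcal{F}_+$ it is exactly the operator bounded below by $\tfrac14$ in \eq{Fplus} of \lem{GSL_ground space}. Hence every nonzero eigenvalue of $H_B|_S$ is at least $\tfrac14$, i.e.\ $\gamma(H_B|_S)\geq\tfrac14$. Second, $\gamma(H_A)=\gamma(A(G)-e_1)=:d>0$. Since also $\|H_B\|=\|2\,\Pi_\mathcal{N}\otimes\Pi_+\|=2$, \lem{npl} gives
\[
\gamma\bigl(A(G^{\rm SL})-e_1\bigr)\;\geq\;\frac{(1/4)\,d}{d+2}\;=\;\frac{d}{4(d+2)}.
\]

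To convert this into a bound of the form $C\,d$ one needs $d=\gamma(A(G)-e_1)$ to be at most an absolute constant, which I would read off from \eq{AG}: by subadditivity of the operator norm,
\[
d\;\leq\;\|A(G)-e_1\|\;\leq\;\|A(g_0)-e_1\|+\|h_\mathcal{S}\|+\|h_\mathcal{E}\|\;\leq\;\|A(g_0)-e_1\|+3\;=:\;D,
\]
where $\|h_\mathcal{S}\|\leq 1$ and $\|h_\mathcal{E}\|\leq 2$ because each node of a gate diagram carries at most one self-loop or edge, so the summands defining $h_\mathcal{S}$ and $h_\mathcal{E}$ act on mutually orthogonal subspaces; $D$ is an absolute constant since $g_0$ is a fixed graph. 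Thus $\frac{d}{4(d+2)}\geq\frac{d}{4(D+2)}$, and the lemma holds with $C=\frac{1}{4(D+2)}$.

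The only step that requires genuine thought is the last one: the Nullspace Projection Lemma gives a bound that grows linearly with $\gamma(A(G)-e_1)$ only while this gap stays comparable to $\|H_B\|=2$, so one must separately note that a gate graph's spectral gap above $e_1$ cannot be large — which is immediate from the bounded operator norm of $A(G)$. Everything else is a mechanical combination of \lem{GSL_ground space} and \lem{npl}.
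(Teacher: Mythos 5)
Your proposal is correct and follows essentially the same route as the paper: the same decomposition $H_A=(A(G)-e_1)\otimes 1_d$, $H_B=2\,\Pi_\mathcal{N}\otimes\Pi_+$, the bound $\gamma(H_B|_S)\geq\tfrac14$ from \lem{GSL_ground space}, the Nullspace Projection Lemma with $\|H_B\|=2$, and finally the observation that $\gamma(A(G)-e_1)$ is bounded by an absolute constant via $\|A(g_0)\|$, $\|h_\mathcal{S}\|\leq 1$, $\|h_\mathcal{E}\|\leq 2$. The only differences are cosmetic (you bound $\gamma(A(G)-e_1)$ by $\|A(G)-e_1\|$ directly, while the paper absorbs $|e_1|$ into the denominator), so nothing further is needed.
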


\begin{proof}
We use the Nullspace Projection Lemma. Write $A(G^{\rm SL})-e_1=H_A+H_B$ as a sum of two positive semidefinite operators
\[
H_A=A(G)\otimes 1-e_1 \qquad H_B=2\,\Pi_\mathcal{N}\otimes \Pi_+.
\]
Note that $\|H_B\|=2$ since $\Pi_\mathcal{N}\otimes \Pi_+$ is a projector. We will also need a bound on $\gamma(H_B|_S)$ where $S$ is the nullspace of $H_A$. Note that $S=\mathcal{F}_+ + \mathcal{F}_-$ where $\mathcal{F}_\pm$ is defined in equation \eq{fpm}. By \lem{GSL_ground space}, the nullspace of $H_B|_S$ is $\mathcal{F}_-$. Using this fact, we see that $\gamma(H_B|_S)$ is equal to the smallest eigenvalue of $H_B$ within the space $\mathcal{F}_+$, which we bound using equation \eq{Fplus}:
\[
\gamma(H_B|_S)\geq \frac{1}{4}.
\]

Now applying the Nullspace Projection Lemma, we get
\begin{equation}
\gamma(A(G^{\rm SL})-e_1)\geq \frac{\frac{1}{4}\gamma(A(G)\otimes 1-e_1)}{
\gamma(A(G)\otimes 1-e_1)+2}\geq  \frac{\frac{1}{4}\gamma(A(G)\otimes 1-e_1)}{
\|A(G)\|+|e_1|+2}
\end{equation}
To complete the proof, we show that $\|A(G)\|$ is upper bounded by an absolute constant. Looking at the general expression \eq{AG} for the adjacency matrix of a gate graph, we see that 
\[
\|A(G)\|\leq \|A(g_0)\|+\|h_\mathcal{S}\|+\|h_\mathcal{E}\|\leq \|A(g_0)\|+1+2.
\]
This completes the proof: the right-hand side of this expression is an absolute constant since $g_0$ is a fixed 128-vertex graph.
\end{proof}

\subsection{Relationship between the Bose-Hubbard models on $G$ and $G^{\rm SL}$}

We begin by defining a linear map from the Hilbert space $(\mathbb{C}^{|V(G)|})^{\otimes N}$ of $N$ distinguishable particles on $G$ to the corresponding space $(\mathbb{C}^{|V(G^{\rm SL})|})^{\otimes N}$ for $G^{\rm SL}$. This map is defined by its action on basis states as follows:
\[
|i_1 \rangle |i_2 \rangle\ldots |i_N\rangle \mapsto |i_1,-\rangle|i_2,-\rangle \ldots |i_N,-\rangle
\]
where $|v,-\rangle=\frac{1}{\sqrt{2}}(|v,0\rangle-|v,1\rangle)$ is a superposition of the two vertices $(v,0)$ and $(v,1)$ of $G^{\rm SL}$ that are associated with vertex $v\in V(G)$.

For a state $|\phi\rangle\in (\mathbb{C}^{|V(G)|})^{\otimes N}$ we write $|\overline{\phi}\rangle$ for its image under this mapping, i.e., 
\[
|\phi\rangle =\sum_{x\in V(G)^N}\alpha_x |x_1\rangle|x_2\rangle\ldots |x_N\rangle
\qquad\mapsto\qquad
|\overline{\phi}\rangle =\sum_{x\in V(G)^N}\alpha_x |x_1,-\rangle|x_2,-\rangle\ldots |x_N,-\rangle.
\]
Clearly, if $|\phi\rangle$ is normalized then so is $|\overline{\phi}\rangle$. Furthermore, if $|\phi\rangle\in \mathcal {Z}_N(G)$ (i.e., if it is symmetric under permutations of the $N$ registers) then $|\overline{\phi}\rangle \in \mathcal {Z}_N(G^{\rm SL})$.

\begin{lemma}
Let $|\phi\rangle,|\psi\rangle \in (\mathbb{C}^{|V(G)|})^{\otimes N}$. Then
\[
\langle \overline{\phi}|\Biggl(\sum_{w\in V(G^{\rm SL})}\hat{n}_w(\hat{n}_w-1)\Biggr)|\overline{\psi}\rangle=\frac{1}{2}\langle \phi|\Biggl(\sum_{v\in V(G)}\hat{n}_v(\hat{n}_v-1)\Biggr)|\psi\rangle.
\]
\label{lem:int}
\end{lemma}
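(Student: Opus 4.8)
The plan is to reduce to computational basis states and then compute matrix elements directly. Both sides of the claimed identity are sesquilinear in the pair $(|\phi\rangle,|\psi\rangle)$, and the map $|\phi\rangle\mapsto|\overline{\phi}\rangle$ is linear, so it will suffice to verify the identity when $|\phi\rangle=|x_1\rangle|x_2\rangle\cdots|x_N\rangle$ and $|\psi\rangle=|y_1\rangle|y_2\rangle\cdots|y_N\rangle$ are basis states of $(\mathbb{C}^{|V(G)|})^{\otimes N}$, with images $|\overline{\phi}\rangle=|x_1,-\rangle\cdots|x_N,-\rangle$ and $|\overline{\psi}\rangle=|y_1,-\rangle\cdots|y_N,-\rangle$.

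First I would rewrite the two interaction terms as sums over ordered pairs of distinct registers. Since $\hat{n}_w=\sum_{i=1}^N(|w\rangle\langle w|)^{(i)}$ and each $(|w\rangle\langle w|)^{(i)}$ is a projector that commutes with $(|w\rangle\langle w|)^{(j)}$ for $i\neq j$, we get $\hat{n}_w(\hat{n}_w-1)=\sum_{i\neq j}(|w\rangle\langle w|)^{(i)}(|w\rangle\langle w|)^{(j)}$. Using the identification $V(G^{\rm SL})=V(G)\times\{0,1\}$, this gives
\[
\sum_{w\in V(G^{\rm SL})}\hat{n}_w(\hat{n}_w-1)=\sum_{i\neq j}\;\sum_{v\in V(G)}\;\sum_{d\in\{0,1\}}\bigl(|v,d\rangle\langle v,d|\bigr)^{(i)}\bigl(|v,d\rangle\langle v,d|\bigr)^{(j)},
\]
and similarly $\sum_{v\in V(G)}\hat{n}_v(\hat{n}_v-1)=\sum_{i\neq j}\sum_{v\in V(G)}(|v\rangle\langle v|)^{(i)}(|v\rangle\langle v|)^{(j)}$. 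It will then be enough to match the two expressions term by term in the pair $i\neq j$.

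Fixing $i\neq j$, I would exploit the fact that $|\overline{\phi}\rangle$ and $|\overline{\psi}\rangle$ are product states across the $N$ registers, so the matrix element of $(|v,d\rangle\langle v,d|)^{(i)}(|v,d\rangle\langle v,d|)^{(j)}$ factorizes. For a spectator register $k\notin\{i,j\}$ we have $\langle x_k,-|y_k,-\rangle=\delta_{x_k,y_k}$, because the appended qubit is $|-\rangle$ in both the bra and the ket; for register $i$ we have $\langle x_i,-|v,d\rangle\langle v,d|y_i,-\rangle=\delta_{x_i,v}\delta_{y_i,v}\,\langle -|d\rangle\langle d|-\rangle=\tfrac12\,\delta_{x_i,v}\delta_{y_i,v}$, and the same holds for register $j$. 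Multiplying these factors and summing over $d\in\{0,1\}$ turns the resulting $\tfrac14$ into $\tfrac12$; summing over $v\in V(G)$ then forces $x_i=y_i=x_j=y_j$, so the $(i,j)$ contribution to the left-hand side equals $\tfrac12\,\delta_{x_i,x_j}\prod_{k=1}^N\delta_{x_k,y_k}$. Running the identical computation on $G$ — where there is no appended qubit, hence no factors of $\tfrac12$ and no sum over $d$ — gives $\delta_{x_i,x_j}\prod_{k=1}^N\delta_{x_k,y_k}$ for the $(i,j)$ term. Summing over all pairs $i\neq j$ then yields the stated identity.

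I do not expect a genuine obstacle here; the computation is routine bookkeeping. The only point requiring care is the factor counting: each of the two coincident particles contributes an overlap $\langle -|d\rangle\langle d|-\rangle=\tfrac12$ with the appended qubit of $G^{\rm SL}$, giving $\tfrac14$ per surviving basis term, but the sum over the two values $d\in\{0,1\}$ restores a factor of $2$, producing the overall ratio $\tfrac12$; one must also carry along the spectator overlaps $\delta_{x_k,y_k}$ so that the final Kronecker-delta structure matches that of the $G$ side exactly.
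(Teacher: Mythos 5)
Your proposal is correct and follows essentially the same route as the paper: both expand $\hat{n}_w(\hat{n}_w-1)$ as a sum of products of single-register projectors over pairs of distinct registers, observe that each of the two registers involved contributes a factor $\langle -|d\rangle\langle d|-\rangle=\tfrac12$ while spectator registers are unaffected, and recover the overall ratio $\tfrac12$ from the sum over $d\in\{0,1\}$. Your reduction to computational basis states via sesquilinearity is just a slightly more explicit presentation of the same matrix-element computation the paper performs directly on general $|\phi\rangle,|\psi\rangle$.
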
\begin{proof}
 Writing $(v,0)$ and $(v,1)$ for the two vertices of $V(G^{\rm SL})$ corresponding to a vertex $v\in V(G)$, we have
  \[
    \sum_{w\in V(G^{\rm SL})}\hat{n}_w(\hat{n}_w-1)= \sum_{v\in V(G), d\in \{0,1\}} \hat{n}_{v,d} (\hat{n}_{v,d} - 1) \qquad \hat{n}_{v,d} = \sum_{i=1}^N \den{v,d}^{(i)}.
  \]
  We can then rewrite each term in the sum as
  \begin{align*}
    \hat{n}_{v,d} (\hat{n}_{v,d}  - 1)
	&= \sum_{j\neq j'} \Bigl(\den{v,d}^{(j)}\Bigr)\Bigl(\den{v,d}^{(j')}\Bigr).
  \end{align*}
  
  Using this expression (and a similar expression for $\hat{n}_v(\hat{n}_v-1)$), we see that
  \begin{align*}
    \sbra{\overline{\phi}}  \hat{n}_{v,d} (\hat{n}_{v,d}  - 1)  \sket{\overline{\psi}} &=  \sbra{\overline{\phi}}   \sum_{j\neq j'} \Bigl(\den{v,d}^{(j)}\Bigr)\Bigl(\den{v,d}^{(j')}\Bigr)  \sket{\overline{\psi}}\\
           &= \big(\langle -|d\rangle\langle d|-\rangle\big)^2 \sbra{\phi}   \sum_{j\neq j'} \Bigl(\den{v}^{(j)}\Bigr)\Bigl(\den{v}^{(j')}\Bigr)  \sket{\psi}\\
	&= \frac{1}{4} \sbra{\phi} \hat{n}_v (\hat{n}_v - 1) \sket{\psi}.
  \end{align*}  
  Summing both sides over $d\in\{0,1\}$ and $v\in V(G)$ gives the claimed result.
\end{proof}
\begin{lemma}
Let $G$ be an $e_1$-gate graph. Then
\begin{equation}
\lambda_N^{1}(G)\leq a \quad \implies \quad \lambda_N^{1}(G^{\rm SL})\leq \frac{3}{2}a.
\label{eq:SL_c}
\end{equation}
If in addition $G$ is described by a gate diagram with $R$ diagram elements and satisfies $\gamma(A(G)-e_1)\geq \frac{C_0}{R^3}$ for some absolute constant $C_0$, then 
\begin{equation}
\lambda_N^{1}(G)\geq b>0 \quad \implies \quad \lambda_N^{1}(G^{\rm SL})\geq \frac{C_0C_1}{R^3N^2}b
\label{eq:SL_s}
\end{equation}
where $C_1$ is an absolute constant.
\label{lem:modified_bounds}
\end{lemma}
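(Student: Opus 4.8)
The plan is to prove both implications using the linear map $|\phi\rangle \mapsto |\overline{\phi}\rangle$ introduced above, together with \lem{int} and the structural fact that $\mu(G^{\rm SL}) = e_1 = \mu(G)$ (which follows from \lem{GSL_ground space}, since the ground energy of $A(G^{\rm SL})$ equals $e_1$). The observation that makes the map useful is that $|\overline{\phi}\rangle$ has every register in the state $|-\rangle$ on its $d$ slot, so the second term $2\,\Pi_{\mathcal N}\otimes\Pi_+$ of $A(G^{\rm SL})$ in \eq{adjacency_SL} is annihilated in every register; consequently $\langle \overline{\phi}|\sum_i A(G^{\rm SL})^{(i)}|\overline{\phi}\rangle = \langle\phi|\sum_i A(G)^{(i)}|\phi\rangle$ for any $|\phi\rangle$, because the map is an isometry. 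More strongly, $|\overline{\phi}\rangle$ lies in the nullspace of $\sum_i(A(G^{\rm SL})^{(i)} - e_1)$ precisely when $|\phi\rangle$ is supported on the ground space $\mathcal F$ of $A(G)$, and in that case the map restricts to an isometric isomorphism from the symmetric subspace of $\mathcal F^{\otimes N}$ onto that nullspace intersected with $\mathcal Z_N(G^{\rm SL})$.

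For \eq{SL_c}, let $|\phi\rangle$ be a unit-norm ground state of $H(G,N)$, and write its hopping excess $h=\langle\phi|\sum_i A(G)^{(i)}|\phi\rangle - N\mu(G)\ge 0$ and interaction energy $u = \langle\phi|\sum_v \hat n_v(\hat n_v - 1)|\phi\rangle\ge 0$, so that $h+u = \lambda_N^1(G)\le a$. Using the isometry to evaluate the hopping term, \lem{int} to halve the interaction term, and $\mu(G^{\rm SL}) = \mu(G)$, the variational energy of $|\overline\phi\rangle$ under $H(G^{\rm SL},N)$ equals $h + \tfrac12 u \le h+u \le a$, which gives $\lambda_N^1(G^{\rm SL})\le a \le \tfrac32 a$ as claimed (with room to spare).

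For \eq{SL_s}, which is the main step, I would invoke the Nullspace Projection Lemma (\lem{npl}) applied to $H(G^{\rm SL},N) = H_A + H_B$, where $H_A = \sum_i(A(G^{\rm SL})^{(i)} - e_1)$ and $H_B = \sum_{w}\hat n_w(\hat n_w - 1)$, both restricted to $\mathcal Z_N(G^{\rm SL})$. Three ingredients are needed. First, a lower bound on $\gamma(H_A)$: the single-register operators $A(G^{\rm SL})^{(i)} - e_1$ commute and each has smallest nonzero eigenvalue $\gamma(A(G^{\rm SL}) - e_1)$, so the unrestricted sum has this same gap; restricting to the symmetric subspace preserves the bound (a symmetric vector orthogonal to $\operatorname{Sym}(\mathcal F_-^{\otimes N})$ is orthogonal to all of $\mathcal F_-^{\otimes N}$, since the latter is permutation-invariant), and then \lem{gap_lem} and the hypothesis $\gamma(A(G)-e_1)\ge C_0/R^3$ give $\gamma(H_A)\ge C C_0/R^3$. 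Second, a lower bound on $\gamma(H_B|_S)$, where $S$ is the nullspace of $H_A$: every element of $S$ is $|\overline\phi\rangle$ for a symmetric $|\phi\rangle$ supported on $\mathcal F$; \lem{int} gives $\langle\overline\phi|H_B|\overline\phi\rangle = \tfrac12\langle\phi|\sum_v\hat n_v(\hat n_v-1)|\phi\rangle$, and since $|\phi\rangle$ minimizes the hopping term of $G$ its interaction energy equals $\langle\phi|H(G,N)|\phi\rangle \ge \lambda_N^1(G)\|\phi\|^2 \ge b\|\phi\|^2$, so $\gamma(H_B|_S)\ge b/2$ and $H_B|_S$ has trivial nullspace. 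Third, $\|H_B\| \le N(N-1)\le N^2$. Feeding these into \lem{npl} yields $\gamma(H(G^{\rm SL},N)) \ge \frac{(b/2)(CC_0/R^3)}{CC_0/R^3 + N^2} \ge \frac{CC_0}{2(CC_0+1)}\cdot\frac{b}{R^3N^2}$, using $R,N\ge 1$ in the last step; since $b>0$ the nullspace of $H(G^{\rm SL},N)$ is trivial, so $\lambda_N^1(G^{\rm SL})$ equals this gap, giving \eq{SL_s} with $C_1 = \frac{C}{2(CC_0+1)}$.

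The main obstacle is the soundness direction, and within it the identification of the nullspace of the $G^{\rm SL}$ hopping term with the image of the symmetric powers of $\mathcal F$, combined with the point that such states, after mapping, have interaction energy bounded below by $\lambda_N^1(G)$ rather than only by the gap of a single-vertex term; extracting $\gamma(H_A)$ from the one-particle gap via the symmetric-subspace restriction is the other place requiring a little care.
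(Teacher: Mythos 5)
Your proposal is correct and follows essentially the same route as the paper: the isometry $\ket{\phi}\mapsto\ket{\overline{\phi}}$ together with \lem{int} for the upper bound, and the Nullspace Projection Lemma applied to the same decomposition $H_A+H_B$, with $\gamma(H_A)$ controlled via \lem{gap_lem} and $\gamma(H_B|_S)\geq b/2$ via the identification of $S$ with the image of the symmetric zero-hopping states of $G$. The only (harmless) deviations are that your completeness bound is slightly tighter ($a$ rather than $\tfrac{3}{2}a$, since you track the hopping and interaction energies jointly) and that you bound $\gamma(H_B|_S)$ by a direct variational argument on $S$ where the paper invokes \fct{variation_gap}; both yield the stated constants.
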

\begin{proof}
Let $|\phi\rangle\in \mathcal{Z}_N(G)$ be a normalized state with minimal energy for $H(G,N)$, i.e., $\langle \phi|H(G,N)|\phi\rangle=\lambda_N^{1}(G)$. The normalized state $|\overline{\phi}\rangle\in \mathcal{Z}_N(G^{\rm SL})$ satisfies
\[
A(G^{\rm SL})^{(i)}|\overline{\phi}\rangle=(A(G)\otimes1_d+2\,\Pi_{\mathcal{N}}\otimes \Pi_+)^{(i)}|\overline{\phi}\rangle=(A(G)\otimes1_d)^{(i)}|\overline{\phi}\rangle
\]
for each $i\in [N]$, so 
\begin{align*}
\langle \overline{\phi}|\sum_{i=1}^N (A(G^{\rm SL})-e_1)^{(i)}|\overline{\phi}\rangle&=\langle \overline{\phi}|\sum_{i=1}^N (A(G)\otimes1_d-e_1)^{(i)}|\overline{\phi}\rangle\\
&=\langle \phi|\sum_{i=1}^N (A(G)-e_1)^{(i)}|\phi\rangle\\
&\leq \langle \phi|\Biggl(\sum_{i=1}^N (A(G)-e_1)^{(i)}+\sum_{v\in V(G)} \hat{n}_v (\hat{n}_v-1)\Biggr)|\phi\rangle\\
&=a
\end{align*}
where in going from the second to the third line we used the fact that $\hat{n}_v (\hat{n}_v-1)$ is positive semidefinite.  Using this inequality and \lem{int}, we have
\begin{align*}
\lambda_N^{1}(G^{\rm SL})&\leq \langle \overline{\phi}|H(G^{\rm SL},N)|\overline{\phi}\rangle\\
&\leq a+\langle \overline{\phi}|\sum_{w\in V(G^{\rm SL})}\hat{n}_w (\hat{n}_w-1)|\overline{\phi}\rangle\\
&=a+\frac{1}{2}\langle \phi|\sum_{v\in V(G)} \hat{n}_v (\hat{n}_v-1)|\phi\rangle\\
&\leq a+\frac{1}{2}\langle \phi|H(G,N)|\phi\rangle\\
&\leq \frac{3}{2}a,
\end{align*}
which establishes equation \eq{SL_c}. 

Now suppose $\lambda_N^{1}(G)\geq b>0$ and $\gamma(A(G)-e_1)\geq \frac{C_0}{R^3}$ and consider the second part of the Lemma. Note that in this case $H(G,N)$ has no nullspace, so $\lambda_N^{1}(G)=\gamma(H(G,N))$. We write $H(G^{\rm SL},N)=H_A+H_B$ with positive semidefinite operators
\[
H_A=\sum_{i=1}^N (A(G^{\rm SL})-e_1)^{(i)}\Big|_{\mathcal{Z}_N(G^{\rm SL})} \qquad H_B=\sum_{w\in V(G^{\rm SL})}\hat{n}_w (\hat{n}_w-1)\big|_{\mathcal{Z}_N(G^{\rm SL})} 
\]
and we use the Nullspace Projection Lemma.

To get a bound on $\gamma(H_A)$, first note that every eigenvalue of $H_A$ is also an eigenvalue of the operator
\begin{equation}
\sum_{i=1}^N (A(G^{\rm SL})-e_1)^{(i)}
\label{eq:fullspace}
\end{equation}
(without the restriction to $\mathcal{Z}_N(G^{\rm SL})$), since this operator is permutation symmetric and preserves the symmetric subspace. Hence the smallest nonzero eigenvalue of $H_A$ is at least that of \eq{fullspace} and 
\begin{equation}
\gamma(H_A)
\geq \gamma\Biggl(\sum_{i=1}^N (A(G^{\rm SL})-e_1)^{(i)}\Biggr)
= \gamma(A(G^{\rm SL})-e_1)
\geq C\gamma(A(G)-e_1)\geq \frac{C C_0}{R^3}
\label{eq:bnd1}
\end{equation}
where $C$ is an absolute constant (in the second step we used the fact that $\gamma(M\otimes I +I\otimes M)=\gamma(M)$ for any   Hermitian matrix $M$ with smallest eigenvalue zero, and in the third step we used \lem{gap_lem}).

We also need a bound on $\gamma(H_B|_S)$, where $S$ is the nullspace of $H_A$. Letting $T$ be the nullspace of 
\[
\sum_{i=1}^N (A(G)-e_1)^{(i)}\Big|_{\mathcal{Z}_N(G)}
\]
and using \lem{GSL_ground space}, we see that $S$ is equal to the image of $T$ under the mapping $|\phi\rangle\mapsto|\overline{\phi}\rangle$. Using this fact and \lem{int}, we get
\[
H_B|_S=\frac{1}{2}\sum_{v\in V(G)} \hat{n}_v (\hat{n}_v-1)\big|_T.
\]
Since $H(G,N)$ has no nullspace, neither does the operator on the right-hand side of this equation. Hence $H_B|_S>0$ and $\lambda_N^{1}(G^{\rm SL})=\gamma(H(G^{\rm SL},N))$.  Furthermore
\begin{equation}
\gamma(H_B|_S)=\frac{1}{2}\gamma\Biggl(\sum_{v\in V(G)} \hat{n}_v (\hat{n}_v-1)\big|_T\Biggr)\geq\frac{1}{2} \gamma (H(G,N))\geq \frac{b}{2}
\label{eq:bnd2}
\end{equation}
where in the first inequality we used \fct{variation_gap}. 

Now using the bounds \eq{bnd1} and \eq{bnd2} along with the fact that $\|H_B\|\leq N^2$ and applying the Nullspace Projection Lemma, we get
\[
\lambda_N^{1}(G^{\rm SL})=\gamma(H(G^{\rm SL},N))\geq \frac{\frac{C C_0 b}{2R^3}}{\frac{C C_0}{R^3}
+N^2}\geq  \frac{\frac{C C_0 b}{2R^3}}{(C C_0+1)N^2}.
\]
Thus the result follows with $C_1=\frac{C}{2CC_0+2}$.
\end{proof}

\section{Removing self-loops}
\label{sec:removing_self_loops}

Our goal is to consider simple graphs, but so far we have described a method for mapping an $e_1$-gate graph $G$ to a graph $G^{\text{SL}}$ with self-loops on every vertex. We now remove all the self loops from $G^{\text{SL}}$ to obtain a simple graph $G^{\text{NSL}}$. The adjacency matrix of this graph is 
\[
  A(G^{\text{NSL}}) = A(G^{\text{SL}}) - \II
\]
and it has smallest eigenvalue $\mu(G^{\text{NSL}})=\mu(G^{\text{SL}})-1$.

Now consider the $N$-particle Bose-Hubbard model on $G^{\text{NSL}}$. We have
\[
  H_{G^{\text{NSL}}}^N = \sum_{i=1}^N A(G^{\text{NSL}})^{(i)} + \sum_{k\in V(G^{\text{NSL}})} \hat{n}_k (\hat{n}_k - 1)= H_{G^{\text{SL}}}^N - N,
\]
so
\[
  H(G^{\text{NSL}},N) = \bar{H}_{G^{\text{NSL}}}^N - N \mu(G^{\text{NSL}}) = \bigl(\bar{H}_{G^{\text{SL}}}^N - N\bigr) -N\bigl(\mu(G^{\text{SL}})-1\bigr) = H(G^{\text{SL}},N).
\]
In particular, the smallest eigenvalues of these two Hamiltonians are equal:
\begin{equation}
  \lambda_N^1(G^{\text{NSL}}) = \lambda_N^1(G^{\text{SL}}).
\label{eq:equalNSL}
\end{equation}
We now use this relationship to show that the following problem is QMA-complete.  
\begin{mdframed}
\begin{problem}
[\textbf{$\alpha$-Frustration-Free Bose-Hubbard Hamiltonian on simple graphs}]
We are given a $K$-vertex simple graph $G$, a number of particles $N\leq K$, and a precision parameter $\epsilon=\frac{1}{T}$. The integer $T \ge 4K$ is provided in unary. We are promised that either $\lambda_{N}^{1}(G)\leq\epsilon^\alpha$ (yes instance) or $\lambda_{N}^{1}(G)\geq \epsilon+\epsilon^\alpha$ (no instance) and we are asked to decide which is the case. 
\end{problem}
\end{mdframed}

\begin{theorem}
\label{thm:mainthm}
For any positive integer $\alpha$, the problem $\alpha$-Frustration-Free Bose-Hubbard Hamiltonian on simple graphs is QMA-complete.
\end{theorem}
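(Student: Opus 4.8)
The plan is to prove QMA-completeness in two steps. Membership in QMA is immediate: the $\alpha$-Frustration-Free Bose-Hubbard Hamiltonian on simple graphs is just the $\alpha$-Frustration-Free Bose-Hubbard Hamiltonian problem restricted to those instances whose adjacency matrix has zero diagonal, so it lies in QMA by \thm{BHQMA_bounds}. The real work is QMA-hardness, which I would establish by a polynomial-time mapping reduction from the restricted problem of \cor{specialcase}.

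Fix a positive integer $\alpha'$, to be chosen as a sufficiently large multiple of $\alpha$ at the end of the argument; by \cor{specialcase} the $\alpha'$-Frustration-Free Bose-Hubbard Hamiltonian is QMA-hard even when the input graph $G$ is promised to be an $e_1$-gate graph with $R$ diagram elements satisfying $\gamma(A(G)-e_1)\ge C_0/R^3$. Given such an instance $(G,N,\epsilon=1/T)$, the reduction outputs the instance $(G^{\rm NSL},N,\epsilon'=1/T')$, where $G^{\rm NSL}$ is the simple graph produced by the two-step construction of \sec{adding_self_loops} and \sec{removing_self_loops}, i.e.\ $A(G^{\rm NSL})=A(G^{\rm SL})-\II$, and $T'$ is an integer (polynomially bounded in $T$) that I pin down in the correctness analysis. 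First I would check that this is a legal instance: from \eq{adjacency_SL} the diagonal of $A(G^{\rm SL})$ consists of all ones (each node of the gate diagram carries either a self-loop or an incident edge but not both, and $g_0$ is simple), so $A(G^{\rm NSL})$ is a symmetric $0$-$1$ matrix with zero diagonal, i.e.\ $G^{\rm NSL}$ is simple, on $256R$ vertices.

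For correctness I would combine \eq{equalNSL} with \lem{modified_bounds}. On a yes instance $\lambda_N^1(G)\le\epsilon^{\alpha'}$, so $\lambda_N^1(G^{\rm NSL})=\lambda_N^1(G^{\rm SL})\le\frac{3}{2}\epsilon^{\alpha'}$; on a no instance $\lambda_N^1(G)\ge\epsilon+\epsilon^{\alpha'}$, so $\lambda_N^1(G^{\rm NSL})\ge\frac{C_0C_1}{R^3N^2}\,\epsilon$. Since $G$ has $K=128R$ vertices and the instance guarantees $T\ge 4K$, both $R$ and $N\le K$ are bounded by fixed polynomials in $T$, so $\frac{C_0C_1}{R^3N^2}\epsilon\ge C_2\,T^{-6}$ for an absolute constant $C_2>0$. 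It then suffices to choose an integer $T'$ with
\[
\frac{2T^6}{C_2}\ \le\ T'\ \le\ (2/3)^{1/\alpha}\,T^{\alpha'/\alpha}.
\]
The left inequality gives $\epsilon'+(\epsilon')^{\alpha}\le 2\epsilon'\le C_2 T^{-6}$, so no instances map to no instances; the right inequality gives $\frac{3}{2}\epsilon^{\alpha'}\le(\epsilon')^{\alpha}$, so yes instances map to yes instances; and one checks separately that $T'\ge 4\cdot 256R=4\,|V(G^{\rm NSL})|$, as the problem requires. Once $\alpha'$ is a large enough multiple of $\alpha$ this interval is nonempty and wide enough to contain an integer for every admissible $T$, and the resulting $T'$ is polynomial in $T$, so the reduction runs in polynomial time.

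The step I expect to be the main obstacle is precisely this quantitative matching of parameters. The map $G\mapsto G^{\rm NSL}$ shrinks the frustration-free spectral gap by the polynomial factor $\frac{C_0C_1}{R^3N^2}$ from \lem{modified_bounds} while inflating the yes-case energy only by the constant $\frac{3}{2}$; to repackage the output as an $\alpha$-frustration-free instance one needs the yes-case bound $\frac{3}{2}\epsilon^{\alpha'}$ to lie polynomially below the post-reduction no-case bound $\sim C_2 T^{-6}$, and this is exactly what forces $\alpha'$ to exceed the combined polynomial degree --- the reason we reduce from a high-$\alpha'$ gate-graph instance rather than from $\alpha'=\alpha$. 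Carrying the integer precision parameter $T'$ (which must be polynomially bounded and at least four times the new vertex count) through the two displayed inequalities is the only genuinely computational part of the argument; everything else is assembled directly from \cor{specialcase}, \eq{equalNSL}, and \lem{modified_bounds}.
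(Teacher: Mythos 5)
Your proposal is correct and follows essentially the same route as the paper's proof: membership is noted as a special case, and hardness comes from reducing the gate-graph-restricted instances of \cor{specialcase} (with an inflated exponent) to the simple graph $G^{\rm NSL}$, using \eq{equalNSL} together with both parts of \lem{modified_bounds} and the bounds $R,N\leq K\leq T/4$. The only difference is bookkeeping: the paper fixes $\beta=8\alpha$ and $\epsilon'=\epsilon^{7}$ and dismisses large-$\epsilon$ instances by a finiteness argument, whereas you leave $\alpha'$ as a sufficiently large multiple of $\alpha$ and select an integer $T'$ in an explicit interval, which works equally well since $T\geq 512$ for every admissible gate-graph instance.
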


\begin{proof}
Let $\alpha$ be a fixed positive integer. The problem is clearly contained in QMA since it is a special case of the QMA-complete problem $\alpha$-Frustration-Free Bose-Hubbard Hamiltonian. To show that it is QMA-hard, we reduce from another QMA-hard problem. Let $\beta=8\alpha$ and define ``Problem A'' to be the special case of $\beta$-Frustration-Free Bose-Hubbard Hamiltonian where the graph $G$ is promised to be an $e_1$-gate graph satisfying \eq{condition}. \cor{specialcase} implies that Problem A is QMA-hard. We provide a reduction from Problem A to $\alpha$-Frustration-Free Bose-Hubbard Hamiltonian on simple graphs. 

Let an instance of Problem A be given, specified by $G$, $N$, and $\epsilon$. We assume that $\epsilon$ is smaller than some absolute constant $C$. We show that any such instance of Problem A has the same solution as the instance of $\alpha$-Frustration-Free Bose-Hubbard Hamiltonian on the simple graph $G^{\text{NSL}}$, with number of particles $N$ and precision parameter $\epsilon'=\epsilon^7$. This is sufficient to prove QMA-hardness of $\alpha$-Frustration-Free Bose-Hubbard Hamiltonian on simple graphs since there are only finitely many instances of Problem A (and of $\beta$-Frustration-Free Bose-Hubbard Hamiltonian) with $\epsilon$ lower bounded by a constant.

First we check that $G^{\text{NSL}}$, $N$, and $\epsilon'$ satisfy the conditions in the definition of the problem, i.e., that they specify a valid instance of $\alpha$-Frustration-Free Bose-Hubbard Hamiltonian. Let $K=|V(G)|$ and $K'=|V(G^{\text{NSL}})|$. Then $K'=2K$, and since $\epsilon=\frac{1}{T}\leq \frac{1}{4K}$, we have
\[
\epsilon'=\epsilon^7=\frac{1}{T'}
\]
with $T'=T^7 \geq 8K=4K'$. Furthermore, $N\leq K\leq K'$. Therefore $G^{\text{NSL}}$, $N$, and $\epsilon'$ specify a valid instance.

Next we show that, provided $\epsilon$ is smaller than some constant $C$, the instance of $\alpha$-Frustration-Free Bose-Hubbard Hamiltonian defined by $G^{\text{NSL}}$, $N$, and $\epsilon'$  has the same solution as the instance of Problem A defined by $G$, $N$, and $\epsilon$. 

Suppose first that the instance of Problem A is a yes instance. Then $\lambda_N^{1}(G)\leq \epsilon^\beta$. Applying the first part of \lem{modified_bounds} and using \eq{equalNSL}, we get
\[
\lambda_N^{1}(G^{\text{NSL}})\leq \frac{3}{2}\epsilon^\beta=\frac{3}{2}\epsilon^{8\alpha}\leq \epsilon^{7\alpha} = (\epsilon')^{\alpha}
\]
where in the second-to-last inequality we used the fact that $\frac{3}{2}\epsilon^{\alpha}\leq \frac{3}{2}\epsilon\leq 1$, which follows from the fact that $\epsilon^{-1}$ is an integer greater than 1. So in this case $G^{\text{NSL}}$, $N$, and $\epsilon'$ specify a yes instance of $\alpha$-Frustration-Free Bose-Hubbard Hamiltonian.

Next suppose that the instance of Problem A is a no instance, so $\lambda_N^{1}(G)\geq \epsilon+\epsilon^\beta$. Applying the second part of \lem{modified_bounds} and using \eq{equalNSL}, we get 
\[
\lambda_N^{1}(G^{\text{NSL}})\geq \frac{D}{R^3 N^2} (\epsilon+\epsilon^\beta)
\]
where $D$ is an absolute constant and $R$ is the number of diagram elements in $G$. Noting that $R\leq K$, $N\leq K$,  and $\epsilon\leq \frac{1}{4K}$, we see that 
\[
\lambda_N^{1}(G^{\text{NSL}})\geq D'(\epsilon^6+\epsilon^{5+\beta})
\]
where $D'$ is another absolute constant. Now, provided $2\epsilon\leq D'$ (which holds for sufficiently large problem size $K$), we have
\[
\lambda_N^{1}(G^{\text{NSL}})\geq 2\epsilon^7\geq \epsilon^{7}+(\epsilon^{7})^\alpha=\epsilon'+(\epsilon')^\alpha,
\]
which shows that $G^{\text{NSL}}$, $N$, and $\epsilon'$ specify a no instance of $\alpha$-Frustration-Free Bose-Hubbard Hamiltonian.
\end{proof}

Finally, we apply this result to the XY Hamiltonian problem.

\begin{mdframed}
\begin{problem}
[\textbf{XY Hamiltonian on simple graphs}]
We are given a $K$-vertex simple graph $G$, an integer $N\leq K$, a real number $c$, and a precision parameter $\epsilon=\frac{1}{T}$. The positive integer $T$ is provided in unary. We are promised that either $\theta_N(G)\leq c$ (yes instance) or else $\theta_N(G)\geq c+\epsilon$ (no instance) and we are asked to decide which is the case.
\end{problem}
\end{mdframed}

QMA-hardness of this problem now follows directly from the fact that $3$-Frustration-Free Bose-Hubbard Hamiltonian on simple graphs is QMA-hard (from \thm{mainthm}) and the reduction from \thm{reduction}.

\begin{theorem}
	XY Hamiltonian on simple graphs is QMA-complete.
\end{theorem}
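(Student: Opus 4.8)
The plan is to derive this from two ingredients already in hand: QMA-containment of the more general ``XY Hamiltonian with local magnetic fields'' problem, and the answer-preserving, graph-preserving reduction of \thm{reduction}, now fed an instance that \thm{mainthm} guarantees to be QMA-hard.

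First I would argue QMA-containment. XY Hamiltonian on simple graphs is nothing but the restriction of XY Hamiltonian with local magnetic fields to those instances $(G,N,c,\epsilon)$ whose adjacency matrix has no self-loops; in that case the second (magnetic-field) term of $O_G$ in \eq{OG} is absent, so the two problems coincide on such instances. Because the set of simple graphs is recognizable in polynomial time, the QMA verifier supplied by \thm{reduction} for the general problem works verbatim here, giving membership in QMA.

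Next I would establish QMA-hardness. Take \thm{mainthm} with $\alpha = 3$, so that $3$-Frustration-Free Bose-Hubbard Hamiltonian on simple graphs is QMA-hard. Then apply the reduction of \thm{reduction}, which sends an instance of $3$-Frustration-Free Bose-Hubbard Hamiltonian specified by $G$, $N$, $\epsilon$ to an instance of XY Hamiltonian specified by the \emph{same} graph $G$, the same $N$, precision $\epsilon' = \epsilon/4$, and threshold $c = N\mu(G) + \epsilon/4$, with the identical solution. Since the reduction does not touch the interaction graph, a simple $G$ is mapped to a simple $G$, so this is in fact a reduction to XY Hamiltonian on simple graphs. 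Composing the two reductions gives a polynomial-time many-one reduction from a QMA-hard problem to XY Hamiltonian on simple graphs, which together with the containment above proves QMA-completeness.

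I expect no real obstacle at this stage: all the substantive work lies in \thm{mainthm}. The only points requiring (routine) care are bookkeeping ones --- checking that the reduction of \thm{reduction} is genuinely polynomial-time on simple-graph instances, in particular that $\mu(G)$ can be computed to sufficient precision so that $c$ is a valid real-number input, and that $\epsilon' = \epsilon/4$ is still a unary precision parameter of polynomial size. All of these are inherited directly from the corresponding facts already used in \thm{reduction} and \thm{mainthm}.
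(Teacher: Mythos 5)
Your proposal is correct and matches the paper's argument: QMA-hardness follows by combining \thm{mainthm} with $\alpha=3$ and the graph-preserving reduction of \thm{reduction}, while containment is immediate since the problem is a special case of XY Hamiltonian with local magnetic fields. The additional bookkeeping remarks you note are indeed routine and already handled by the cited results.
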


\section*{Acknowledgments}

This work was supported in part by CIFAR; NSERC; the Ontario Ministry of Research and Innovation; the Ontario Ministry of Training, Colleges, and Universities; and the US ARO. DG acknowledges funding provided by the Institute for Quantum Information and Matter, an NSF Physics Frontiers Center (NFS Grant PHY-1125565) with support of the Gordon and Betty Moore Foundation (GBMF-12500028).

\bibliographystyle{myhamsplain}
\bibliography{XYQMA}
\appendix

\section{Proof of \cor{gap}} \label{app:cor}

The strategy used in \cite{BHQMA} to establish QMA-hardness is to reduce quantum circuit satisfiability to Frustration-Free Bose-Hubbard Hamiltonian using two steps. First, starting from a QMA verification circuit $\mathcal{C}_X$ for an instance $X$ of a problem in QMA, it was shown how to construct an $e_1$-gate graph $G_X$. (The explicit construction of $G_X$ is detailed in Section 6.2 of \cite{BHQMA}, and the fact that it is an $e_1$-gate graph is proven in Lemma 11 of that paper.) Second, the graph $G_X$ (along with some extra information encoding ``occupancy constraints'') is used to construct another $e_1$-gate graph $G^{\square}_X$ (this construction is detailed in Appendix C of \cite{BHQMA}). The class of instances of Frustration-Free Bose-Hubbard Hamiltonian that are shown to be QMA-hard all use graphs $G^{\square}_X$ of this form. Thus, to prove \cor{specialcase}, it suffices to show that 
\begin{equation}
\gamma\bigl(A(G_X^{\square})-e_1\bigr)\geq \frac{C_0}{R^3}
\label{eq:boundagx}
\end{equation}
where $R$ is the number of diagram elements in $G^{\square}_X$ and $C_0$ is an absolute constant.

To prove the following results, we rely on some facts established in \cite{BHQMA} (in particular, see Sections 5, 6.2, 7.1, and Appendix E.4).
In this appendix we analyze the spectral gaps of $G_X$ and $G^\square_X$ and thereby establish \eq{boundagx}.

\begin{lemma}
$\gamma(A(G_X)-e_1)\geq D$, where $D$ is an absolute constant.
\label{lem:D_lem}
\end{lemma}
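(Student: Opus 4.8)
The goal is to lower-bound the spectral gap $\gamma(A(G_X)-e_1)$ by an absolute constant, where $G_X$ is the $e_1$-gate graph built from a QMA verification circuit $\mathcal{C}_X$ in \cite{BHQMA}. The natural strategy is to use the Nullspace Projection Lemma (\lem{npl}) together with the structure of $G_X$ as a gate graph. Recall from equation \eq{AG} that $A(G_X)-e_1 = (1_q\otimes A(g_0)-e_1) + h_{\mathcal{S}} + h_{\mathcal{E}}$. The first step is to set $H_A = 1_q\otimes A(g_0)-e_1$ and $H_B = h_{\mathcal{S}}+h_{\mathcal{E}}$. Since $g_0$ is a fixed $128$-vertex graph, $\gamma(A(g_0)-e_1)$ is a fixed positive constant, and since $1_q\otimes A(g_0)-e_1$ is just $R$ copies of $A(g_0)-e_1$, we get $\gamma(H_A)$ bounded below by that same absolute constant $d$. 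Also $\|H_B\| \leq \|h_{\mathcal{S}}\|+\|h_{\mathcal{E}}\| \leq 1+2 = 3$, an absolute constant.

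The remaining ingredient for \lem{npl} is a bound $\gamma(H_B|_S) \geq c > 0$ where $S$ is the nullspace of $H_A$, i.e. $S = \mathcal{Y}$, the span of the states $|\psi^q_{z,a}\rangle$ from \eq{Yspace}. So I would need to analyze the operator $(h_{\mathcal{S}}+h_{\mathcal{E}})|_{\mathcal{Y}}$ and show its smallest nonzero eigenvalue is bounded below by an absolute constant — crucially one that does \emph{not} degrade with $R$. This is where the detailed structure of the specific gate diagram used to build $G_X$ enters: one must invoke the facts from \cite{BHQMA} (Sections 5, 6.2, 7.1, and Appendix E.4) describing how the edges and self-loops of the gate diagram for $G_X$ are laid out. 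The point is that the "effective" Hamiltonian $(h_{\mathcal{S}}+h_{\mathcal{E}})|_{\mathcal{Y}}$, after identifying $\mathcal{Y}$ with a space of the form $\mathbb{C}^R\otimes\mathbb{C}^4$ via $|\psi^q_{z,a}\rangle \leftrightarrow |q\rangle|z,a\rangle$, takes the form of a frustration-free Hamiltonian whose gap was already controlled in \cite{BHQMA} — or more precisely, $G_X$ is constructed from a bounded number of fixed "gadget" subdiagrams whose internal structure is independent of $X$, so the relevant gap is a minimum over finitely many fixed configurations and hence an absolute constant.

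**Main obstacle.** The hard part is establishing $\gamma\bigl((h_{\mathcal{S}}+h_{\mathcal{E}})|_{\mathcal{Y}}\bigr) \geq c$ with $c$ an \emph{absolute} constant, i.e. genuinely $R$-independent. A priori this operator acts on an $R$-dependent space and its gap could shrink with $R$; ruling this out requires knowing that the gate diagram for $G_X$ has bounded "local structure" in the appropriate sense — that each diagram element is coupled to only $O(1)$ others in a pattern drawn from a fixed finite list — so that the gap computation reduces to a fixed finite set of cases. I would carry this out by citing the explicit description of the gate diagram for $G_X$ from Section 6.2 of \cite{BHQMA} and the spectral analysis in Appendix E.4 there, extracting from it that the restricted operator decomposes (or can be bounded below via another application of \lem{npl} or \fct{variation_gap}) in a way that isolates $R$-independent blocks. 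Once $c$, $d$, and $\|H_B\|$ are all absolute constants, \lem{npl} gives $\gamma(A(G_X)-e_1) \geq \frac{cd}{d+\|H_B\|} =: D$, an absolute constant, completing the proof.
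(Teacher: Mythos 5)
There is a genuine gap, and it sits exactly where you flagged it. With your decomposition $H_A=1_q\otimes A(g_0)-e_1$, $H_B=h_{\mathcal{S}}+h_{\mathcal{E}}$, everything except the bound $\gamma\bigl(H_B|_{\mathcal{Y}}\bigr)\geq c$ is routine, but that bound \emph{is} the lemma: by \fct{variation_gap}, $\gamma(A(G_X)-e_1)\leq\gamma\bigl((h_{\mathcal{S}}+h_{\mathcal{E}})|_{\mathcal{Y}}\bigr)$, so an $R$-independent lower bound on this restricted operator is equivalent (up to the constants $d$ and $\|H_B\|\leq 3$) to the statement being proved, and your proposal asserts it rather than proves it. The justification you offer --- that each diagram element couples to $O(1)$ others in a pattern drawn from a fixed finite list, so the gap ``reduces to a fixed finite set of cases'' --- is a non sequitur: a sum of terms drawn from a fixed finite list on a bounded-degree interaction graph can perfectly well have smallest nonzero eigenvalue vanishing with system size (the Feynman--Kitaev propagation Hamiltonian, or an XX chain, are of exactly this form), and the gate diagram for $G_X$ is connected across a number of diagram elements growing with the circuit size. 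Bounded local structure alone cannot yield an absolute-constant gap; one needs a structural fact about how the ground-space components decouple.

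The paper gets that structural fact by choosing a different split: $H_A=A(G_1)-e_1$, where $G_1$ retains the edges and self-loops \emph{internal} to the constant-size gadgets (two-qubit gate gadgets and boundary gadgets), so $\gamma(H_A)$ is a constant because $G_1$ is a disjoint union of graphs from a fixed list of five, and its ground space has the explicit basis $\{|\rho^L_{z,a}\rangle\}$ of gadget ground states. Only the \emph{remaining} terms $h_{\mathcal{E}'}+h_{\mathcal{S}'}$ are treated perturbatively, and the computations of Appendix E.4 of \cite{BHQMA} (equations (E29)--(E34), (E39)--(E40), etc.) show that their restriction to $\Span\{|\rho^L_{z,a}\rangle\}$ is block diagonal with blocks of size at most two, each drawn from a fixed finite set of matrices; that is the fact that legitimately produces an absolute constant $\tilde{c}$, after which \lem{npl} gives $\gamma(A(G_X)-e_1)\geq c\tilde{c}/(c+3)$. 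Note also that the matrix elements computed in \cite{BHQMA} are in the $|\rho^L_{z,a}\rangle$ basis, not in your basis $|\psi^q_{z,a}\rangle$ of \eq{Yspace}, so the results you cite do not directly bound your $H_B|_{\mathcal{Y}}$. Your route could be repaired by applying \lem{npl} in two stages --- first absorbing the within-gadget edges, then the cross-gadget ones --- but that is precisely the paper's argument; as written, the crucial constant is assumed, not established.
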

\begin{proof}
Section 6.2 of \cite{BHQMA} describes in detail how the graph $G_X$ is constructed from the verification circuit. Its adjacency matrix can be written
\[
A(G_X)=A(G_1)+h_{\mathcal{E^\prime}}+h_{\mathcal{S^\prime}}
\]
where $G_1$ is an $e_1$-gate graph with many components, each of constant size. Here $h_{\mathcal{E^\prime}}$ and $h_{\mathcal{S^\prime}}$ are of the form given in equations \eq{HS} and \eq{HE}, but in this case, $\mathcal{E^\prime}$ and $\mathcal{S^\prime}$ are subsets all the edges and self-loops, respectively, in the gate diagram for $G_X$. In particular, the gate diagram for $G_1$ contains some of the edges and self-loops in the gate diagram for $G_X$, and the rest are included in the sets  $\mathcal{E^\prime}$ and $\mathcal{S^\prime}$.

Note that in reference \cite{BHQMA} the terms we refer to as $h_{\mathcal{E^\prime}}$ and $h_{\mathcal{S^\prime}}$ above are denoted by
\begin{equation}
h_{\mathcal{E^\prime}}+h_{\mathcal{S^\prime}}=h_1+h_2+\sum_{i=n_{\rm{in}}+1}^{n}h_{\rm{in},i}+h_{\rm out}.
\label{eq:decomp}
\end{equation}
Each of the terms on the right-hand side corresponds to a subset of the edges or self-loops in the gate diagram for $G_X$; in reference  \cite{BHQMA}  it was convenient to further subdivide the sets $\mathcal{E^\prime}$ and $\mathcal{S^\prime}$, but it will not be necessary to review the details of this here.

Every component of $A(G_1)$ is one of 5 fixed graphs, each with a constant number of vertices (four of these 5 possible graphs are called two-qubit gate gadgets and the fifth is called the boundary gadget--see Figures 5.2-5.3 of \cite{BHQMA}). The smallest nonzero eigenvalue satisfies 
\begin{equation}
\gamma(A(G_1)-e_1)\geq c,
\label{eq:c_bnd}
\end{equation}
where $c$ is an absolute constant equal to the smallest nonzero eigenvalue of one of these 5 graphs.  

A basis for the $e_1$-energy ground space of $A(G_1)$, which we denote by $S$ in the following, is given by the set of states $\{|\rho_{z,a}^L\rangle : L\in \mathcal{L}\}$ defined before Lemma 11 in \cite{BHQMA}. In this basis for $S$ one can compute the matrix elements of the operator
\begin{equation}
(h_{\mathcal{E^\prime}}+h_{\mathcal{S^\prime}})\big|_S
\label{eq:rest_S}
\end{equation}
These matrix elements are computed in reference \cite{BHQMA}. Specifically (referring to the decomposition on the right-hand side of equation \eq{decomp}), the matrix elements of $h_1|_S$ are given in equations (E29)--(E34), those of $h_2|_S$ are given in equations (E39)--(E40), those of $h_{\rm{in},i}|_S$ appear in the equation preceding before (E45), and those of $h_{\rm out}|_S$ are given in the second equation of Section E.4.4.

Here we use the following fact about the operator \eq{rest_S}, which follows from the equations of reference \cite{BHQMA} listed above: it is block diagonal in the basis $\{|\rho_{z,a}^L\rangle : L\in \mathcal{L}\}$ with blocks of size at most two. Furthermore, each of these blocks is one of a fixed set of $2\times 2$ (or $1\times 1$) matrices, and therefore the smallest nonzero eigenvalue of \eq{rest_S}, equal to the smallest nonzero eigenvalue of one of these blocks, is lower bounded by a fixed constant that we denote $\tilde{c}$. Now we apply the Nullspace Projection Lemma using this fact, equation \eq{c_bnd}, and the fact that 
\[
\|h_\mathcal{E'}+h_\mathcal{S^\prime}\|\leq \|h_\mathcal{E'}\|+\|h_\mathcal{S^\prime}\|\leq 2+1.
\]
We get
\[
\gamma(A(G_X)-e_1)\geq \frac{c\tilde{c}}{c+
3},
\]
which completes the proof.
\end{proof}

The following lemma uses a construction from Appendix C.3 of \cite{BHQMA}.
\begin{lemma}
$\gamma(A(G_X^{\square})-e_1)\geq \frac{C_0}{R^3}$, where $R$ is the number of diagram elements in the gate diagram for $G_X^{\square}$ and $C_0$ is an absolute constant.
\end{lemma}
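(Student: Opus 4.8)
The goal is to establish $\gamma(A(G_X^{\square})-e_1)\geq C_0/R^3$ by relating the spectral gap of $A(G_X^\square)$ to that of $A(G_X)$, which was already bounded below by an absolute constant $D$ in \lem{D_lem}. The construction in Appendix C.3 of \cite{BHQMA} builds $G_X^\square$ from $G_X$ by adjoining extra structure to enforce the occupancy constraints; at the level of adjacency matrices this should take the form
\[
A(G_X^\square) = A(\tilde G) + h_{\mathcal{E}''} + h_{\mathcal{S}''}
\]
where $\tilde G$ is a disjoint union of $G_X$ with a collection of constant-size gadget graphs (so that $\gamma(A(\tilde G)-e_1)$ is bounded below by a constant, using \lem{D_lem} and $\gamma(M\oplus M')=\min\{\gamma(M),\gamma(M')\}$ for the relevant pieces), and $h_{\mathcal{E}''}, h_{\mathcal{S}''}\geq 0$ have norm at most $3$ as in \eq{HS}--\eq{HE}. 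The plan is to apply the Nullspace Projection Lemma (\lem{npl}) with $H_A = A(\tilde G)-e_1$ and $H_B = h_{\mathcal{E}''}+h_{\mathcal{S}''}$.

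The first step is to identify the nullspace $S$ of $H_A$ — the $e_1$-ground space of $\tilde G$ — and compute the matrix elements of $H_B|_S$ in the natural basis of ground states coming from \eq{psi1}--\eq{psi2} (as tracked through the gate-graph formalism). This is exactly the kind of calculation carried out in \cite{BHQMA}; I would quote the relevant equations from Appendix C.3 and conclude that $H_B|_S$ is the Laplacian-like operator whose nonzero spectrum controls the constraint satisfaction. The key quantitative input I expect to need is a lower bound $\gamma(H_B|_S)\geq c'/R^2$ (or possibly $c'/R$) rather than a constant: unlike the situation in \lem{D_lem}, the operator $H_B|_S$ is no longer block-diagonal with constant-size blocks, because the occupancy-constraint edges can link together $\Theta(R)$ ground-space components, producing a connected graph-Laplacian-type operator on $\Theta(R)$ vertices whose spectral gap scales inverse-polynomially in $R$. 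I would cite the explicit gap estimate from \cite{BHQMA} (the relevant bound in Appendix C or E.4) to get this $1/\mathrm{poly}(R)$ factor.

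With $\gamma(H_A)\geq d$ for an absolute constant $d$ (from \lem{D_lem} together with the constant gaps of the auxiliary gadgets), $\gamma(H_B|_S)\geq c'/R^2$, and $\|H_B\|\leq 3$, the Nullspace Projection Lemma gives
\[
\gamma(A(G_X^\square)-e_1) = \gamma(H_A+H_B) \geq \frac{(c'/R^2)\, d}{d+3} \geq \frac{C_0}{R^3},
\]
where the last step just absorbs constants and uses $R\geq 1$ (a weaker power of $R$ in the $H_B|_S$ bound would only help). This yields \eq{boundagx}, and hence \cor{specialcase}, since the QMA-hard instances of $\beta$-Frustration-Free Bose-Hubbard Hamiltonian from \cite{BHQMA} all use graphs of the form $G_X^\square$. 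The main obstacle is the middle step: extracting the precise polynomial dependence of $\gamma(H_B|_S)$ on $R$ from the constraint structure in Appendix C.3 of \cite{BHQMA}, since this is where the $R^{-3}$ (as opposed to a constant) in the corollary originates, and it requires care to see that the constraint graph's algebraic connectivity degrades by at most a fixed polynomial factor.
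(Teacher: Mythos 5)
There is a genuine gap, and it sits exactly where you flag your ``main obstacle'': the lower bound on $\gamma(H_B|_S)$ is never established, and the decomposition you propose does not match the construction in Appendix C of \cite{BHQMA}, so the estimate you plan to cite does not exist in the form you need. Concretely, $G_X^\square$ is \emph{not} obtained by adjoining constant-size gadgets to a disjoint copy of $G_X$: the natural $H_A$ is $A(G_X^\triangle)-e_1$, where $G_X^\triangle$ is the occupancy-constraint gate graph of equation (C.8) of \cite{BHQMA}, a structure spanning $\Theta(R')$ diagram elements whose gap degrades with size --- Lemma 19 of \cite{BHQMA} gives only $\gamma(A(G_X^\triangle)-e_1)\geq \frac{1}{(30R')^2}$, not an absolute constant. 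Moreover the original edge and self-loop terms $h_\mathcal{E}+h_\mathcal{S}$ of $G_X$ are not present at full strength inside $A(G_X^\square)$; rather, by equation (C.51) of \cite{BHQMA}, the new terms restricted to the nullspace $Q^\triangle$ of $H_A$ satisfy $(h_{\mathcal{E}_0}+h_{\mathcal{S}_0})|_{Q^\triangle}=(h_\mathcal{E}+h_\mathcal{S})|_Q$ times a factor $\approx \frac{1}{3R'}$. Thus \lem{D_lem} enters not through $\gamma(H_A)$ (as in your plan) but through $H_B|_S$: one gets $\gamma\bigl((h_{\mathcal{E}_0}+h_{\mathcal{S}_0})|_{Q^\triangle}\bigr)\geq \frac{1}{4R'}\gamma\bigl((h_\mathcal{E}+h_\mathcal{S})|_Q\bigr)\geq \frac{1}{4R'}\gamma(A(G_X)-e_1)\geq \frac{D}{4R'}$, where the middle step uses \fct{variation_gap}. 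The $R^{-3}$ then arises from the Nullspace Projection Lemma as (an $R^{-1}$ from the rescaling of the effective constraints) times (an $R^{-2}$ from the gap of $G_X^\triangle$), not from an algebraic-connectivity estimate for a constraint Laplacian on $\Theta(R)$ vertices as you conjecture.

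Because of this structural mismatch, your plan cannot be completed as written: there is no statement in \cite{BHQMA} giving $\gamma(H_B|_S)\geq c'/R^2$ for a splitting in which $H_A$ is a disjoint union of $G_X$ with constant-size gadgets and has constant gap; the facts actually available (C.51 and Lemma 19) are tied to the splitting $A(G_X^\square)=A(G_X^\triangle)+h_{\mathcal{E}_0}+h_{\mathcal{S}_0}$. A useful sanity check: if your decomposition with constant-gap $H_A$ and $\gamma(H_B|_S)=\Omega(1/R^2)$ were available, the Nullspace Projection Lemma would yield a bound of order $1/R^2$, strictly better than the $1/R^3$ in the statement --- a sign that the proposed structure misreads where the polynomial losses actually occur. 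The final arithmetic step of your argument is fine, but the deferred middle step is precisely the content of the lemma, so the proposal does not yet constitute a proof.
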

\begin{proof}
The adjacency matrix of the graph $G_X^{\square}$ can be written
\[
A(G_X^{\square})=A(G_X^{\triangle})+h_{\mathcal{E}_0}+h_{\mathcal{S}_0},
\]
where $G_X^{\triangle}$ is an $e_1$-gate graph (it is described in equation (C.8) of \cite{BHQMA} and is used to enforce the occupancy constraints, although we will not need its explicit form) and $h_{\mathcal{E}_0}$ and $h_{\mathcal{S}_0}$ are of the form in equations \eq{HE} and \eq{HS}, respectively, for some set of edges and self-loops in the gate diagram for $G_X^{\square}$. States with low energy for the graph $G_X^{\triangle}$ are related to those with low energy for $G_X$ in the following way.

Since $G_X$ is an $e_1$-gate graph, its adjacency matrix can also be written as
\[
A(G_X)=1_q\otimes A(g_0)+h_{\mathcal{E}}+h_{\mathcal{S}}.
\]
Let $Q$ be the $e_1$-energy ground space of the first term in the above equation (i.e., $1_q\otimes A(g_0)$), and let $Q^\triangle$ be the $e_1$-energy ground space of $A(G^{\triangle})$. The graph $G_X^{\triangle}$ is constructed so that these two spaces have the same dimension, and furthermore
\[
(h_{\mathcal{E}_0}+h_{\mathcal{S}_0})\big|_{Q^{\triangle}}=(h_{\mathcal{E}}+h_{\mathcal{S}})\big|_{Q} \cdot \begin{cases} \frac{1}{3R'+2} & R'\text{ odd}\\
\frac{1}{3R'-1} & R' \text{ even}
\end{cases}
\]
where $R'$ is the number of diagram elements in the gate diagram for $G_X$ (this is equation C.51 of \cite{BHQMA}). It is also proven in Lemma 19 of \cite{BHQMA} that
\begin{equation}
\gamma\bigl(A(G_X^{\triangle})-e_1\bigr)\geq \frac{1}{(30R')^2}.
\label{eq:Gtrigap}
\end{equation}

Using this equation, we get
\begin{equation}
\gamma \Bigl((h_{\mathcal{E}_0}+h_{\mathcal{S}_0})\big|_{Q^{\triangle}}\Bigr)
\geq \frac{1}{4R'} \gamma\Bigl((h_{\mathcal{E}}+h_{\mathcal{S}})\big|_Q \bigr)
\geq \frac{1}{4R'}\gamma(A(G_X)-e_1)\geq \frac{D}{4R'}
\label{eq:gam}
\end{equation}
where in the second inequality we used \fct{variation_gap} and where $D$ is the absolute constant from \lem{D_lem}. 

We now apply the Nullspace Projection Lemma with the decomposition $A(G_X^{\square})-e_1=H_A+H_B$, where $H_A=A(G_X^{\triangle})-e_1$ and $H_B=h_{\mathcal{E}_0}+h_{\mathcal{S}_0}$. Note that $\|H_B\|\leq \|h_{\mathcal{E}}\|+\|h_{\mathcal{S}}\|\leq 3$ . Applying the Nullspace Projection Lemma and using \eq{gam} and \eq{Gtrigap}, we get
\[
\gamma\bigl(A(G_X^{\square})-e_1\bigr)\geq \frac{\frac{D}{4R'(30R')^2}}{
\frac{1}{(30R')^2}+3}\geq\frac{C_0}{R'^3}
\]
where $C_0$ is an absolute constant. To complete the proof, we use the fact that $R$ (the number of diagram elements in $G_X^{\square}$) is greater than or equal to $R'$ (the number of diagram elements in $G_X$), which is a general property of the mapping $G \mapsto G^{\square}$ as described  in Appendix C of \cite{BHQMA}.
\end{proof}

\end{document}